\documentclass{article}

\PassOptionsToPackage{numbers,compress}{natbib}
\usepackage[main,final]{neurips_2026}

\usepackage[T1]{fontenc}
\usepackage[utf8]{inputenc}
\usepackage{microtype}
\usepackage{amsmath,amssymb,amsthm,mathtools}
\usepackage{aliascnt}
\usepackage{enumitem}
\usepackage{booktabs}
\usepackage{xcolor}
\usepackage{url}
\usepackage[pdfusetitle,hidelinks,bookmarksnumbered=true]{hyperref}
\hypersetup{pdfauthor={Tingyi Lin, Jiazhuo Li, Ruoran Lai}}
\usepackage[nameinlink,noabbrev]{cleveref}

\newtheorem{theorem}{Theorem}[section]
\crefname{theorem}{Theorem}{Theorems}
\Crefname{theorem}{Theorem}{Theorems}
\newaliascnt{lemma}{theorem}
\newtheorem{lemma}[lemma]{Lemma}
\aliascntresetthe{lemma}
\crefname{lemma}{Lemma}{Lemmas}
\Crefname{lemma}{Lemma}{Lemmas}
\newaliascnt{proposition}{theorem}
\newtheorem{proposition}[proposition]{Proposition}
\aliascntresetthe{proposition}
\crefname{proposition}{Proposition}{Propositions}
\Crefname{proposition}{Proposition}{Propositions}
\newaliascnt{corollary}{theorem}
\newtheorem{corollary}[corollary]{Corollary}
\aliascntresetthe{corollary}
\crefname{corollary}{Corollary}{Corollaries}
\Crefname{corollary}{Corollary}{Corollaries}
\theoremstyle{definition}
\newaliascnt{definition}{theorem}
\newtheorem{definition}[definition]{Definition}
\aliascntresetthe{definition}
\crefname{definition}{Definition}{Definitions}
\Crefname{definition}{Definition}{Definitions}
\newaliascnt{example}{theorem}
\newtheorem{example}[example]{Example}
\aliascntresetthe{example}
\crefname{example}{Example}{Examples}
\Crefname{example}{Example}{Examples}
\theoremstyle{remark}
\newaliascnt{remark}{theorem}
\newtheorem{remark}[remark]{Remark}
\aliascntresetthe{remark}
\crefname{remark}{Remark}{Remarks}
\Crefname{remark}{Remark}{Remarks}
\crefname{table}{Table}{Tables}
\Crefname{table}{Table}{Tables}
\crefname{section}{Section}{Sections}
\Crefname{section}{Section}{Sections}
\crefname{subsection}{Section}{Sections}
\Crefname{subsection}{Section}{Sections}
\crefname{appendix}{Appendix}{Appendices}
\Crefname{appendix}{Appendix}{Appendices}
\crefname{figure}{Figure}{Figures}
\Crefname{figure}{Figure}{Figures}
\theoremstyle{plain}
\newtheorem{maintheorem}{Theorem}

\newcommand{\R}{\mathbb{R}}
\newcommand{\E}{\mathbb{E}}
\newcommand{\Aset}{\mathcal{A}}
\newcommand{\Sig}{\Sigma}

\title{How Much Must a Private Mempool Hide? Exact Leakage Thresholds for Sandwich Attacks}
\author{%
  Tingyi Lin$^{1\dagger}$, Jiazhuo Li$^{2}$, Ruoran Lai$^{3}$\\
  $^{1}$Adrasteia Labs\quad $^{2}$University of Michigan\quad $^{3}$Sun Yat-sen University
}
\date{}

\begin{document}
\maketitle
{\renewcommand{\thefootnote}{\fnsymbol{footnote}}%
\footnotetext[2]{Correspondence to: tingyi3@illinois.edu}}

\begin{abstract}
Private and encrypted mempools hide pending transactions to stop sandwich
attacks and other forms of maximal extractable value (MEV), but what they hide
is rarely everything: a transaction's pair, direction, and a coarse range for
its size can still leak. How much leakage makes sandwiching pay? We answer
exactly for a fee-free constant-product automated market maker, the pricing rule
behind Uniswap v2. Traders observe an interval
containing the victim's size and bid in a first-price auction for the right to
sandwich it, and the winning front-run must keep the victim's trade executable
at every size in the interval. The answer turns on the smallest size consistent
with the leak. It alone determines the feasible front-runs, the largest feasible
front-run is optimal for pointwise, expected, and worst-case profit alike, and
the guaranteed profit has a closed form. When execution is costly, a privacy
layer that wants to rule out sandwiches profitable at every consistent size may
therefore reveal anything about the size except a lower bound above an explicit
threshold; the upper end of the range is irrelevant. With two or more symmetric
traders, every pure-strategy perfect Bayesian equilibrium of the auction hands
the entire expected net rent to the auctioneer. If the direction is hidden too, no
non-contingent first leg front-runs both possible directions, while post-trade
arbitrage can survive even perfect pre-trade hiding.
\end{abstract}

\begin{quote}
    \itshape
    ``A fool uttereth all his mind: but a wise man keepeth it in till afterwards.''
    \par
    \hfill -- Proverbs 29:11 (King James Version)
\end{quote}
\section{Introduction}

A hidden order has size $q>0$, but outside agents observe only a coarse public
signal. When does such leakage suffice to support profitable predatory behavior
in equilibrium? We answer this question exactly in the trading game induced by a
constant-product automated market maker (AMM), a smart contract that exchanges
two tokens at prices set by its reserves. The motivating examples come from
hidden-order execution environments, but only the induced information structure
enters the theorems.

If an adversarial trader knows that a latent order intends to buy token $Y$ using
token $X$, then a front-run purchase of $Y$ raises that order's execution price
and creates the standard sandwich pattern: trade in the latent order's
direction, let the latent order move the pool further, and then unwind in the
opposite direction. Under full observability, the trader may know the hidden
order exactly. Our interest is the intermediate regime in which the trader sees
only coarse information. Sandwiching remains common on Ethereum
\cite{TorresCaminoState21,QinZhouGervais22} and is not confined to simple
Uniswap v2 paths: McLaughlin, Kruegel, and Vigna identify
63{,}257 sandwich attacks among apparent arbitrages, including one whose
manipulation and unwind route through both Uniswap v2 and v3
\cite{McLaughlinKruegelVigna23}, and Bai et al.\ detect 60{,}946 sandwich
events, multi-token routes included, in 210{,}000 Ethereum blocks
\cite{BaiLiJiangDu25}.

A round of our game has four steps. A victim submits a swap whose size is
hidden, with a slippage guard that cancels the swap if its output falls too far
below the honest output. A public signal then reveals an interval that contains
the hidden size. Traders bid for the single right to surround the swap, with a
front-run immediately before it and an unwind immediately after. The winner
fixes its front-run size without learning the exact size, and the front-run must
leave the swap executable at every size consistent with the signal.

Formally, a leakage map sends the hidden size $q$ to a public signal $s$ whose
interval $I_s\ni q$ is either $[\ell_s,u_s]$ with lower endpoint $\ell_s>0$ or
$(0,u_s]$. The victim's direction and slippage tolerance $\tau\in(0,1)$ are
public. Before bidding, every trader and the auctioneer observe the pool
reserves $(X,Y)$, the trading pair, the direction, $\tau$, and $s$, but not
$q$, and they share a posterior $\mu_s$ on $I_s$ derived from a common prior.
Bids may use only \emph{robustly admissible} front-runs, which keep the victim's
trade valid for every $q\in I_s$.

Our first result characterizes the feasible front-runs and the profit they
guarantee.

\begin{maintheorem}[Admissibility, value, and threshold]
Fix a leaked interval $[\ell_s,u_s]$ with $\ell_s>0$.
\begin{enumerate}[label=(\roman*),leftmargin=2em]
\item The robustly admissible front-runs form an interval
$[0,a_{\max}(\ell_s,\tau)]$ whose endpoint is an explicit function of $X$,
$\tau$, and $\ell_s$ alone.
\item The largest admissible front-run is the unique maximizer of pointwise,
expected, and worst-case profit, and its worst-case gross profit is
\[
\frac{\tau \ell_s (X+\ell_s)}{X+\tau \ell_s}.
\]
\item With execution cost $c\ge 0$, a robustly admissible bundle with positive
net payoff at every consistent size exists if and only if $\ell_s$ exceeds an
explicit threshold $\lambda_c$ that depends only on $X$, $\tau$, and $c$.
\end{enumerate}
\end{maintheorem}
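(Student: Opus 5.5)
The plan is a direct computation in the fee-free constant-product model. I normalize the victim to spend $q$ units of $X$ to buy $Y$; the opposite direction is symmetric after swapping token labels. With reserves $(X,Y)$ and invariant $XY$, the honest output is $y(q)=Yq/(X+q)$. After a front-run of size $a\ge 0$ the reserves are $(X+a,\,XY/(X+a))$, and the victim receives $XYq/((X+a)(X+a+q))$. The slippage guard therefore reads
\[
\rho_a(q):=\frac{X(X+q)}{(X+a)(X+a+q)}\ \ge\ 1-\tau .
\]
For fixed $a>0$, $\rho_a$ is strictly increasing in $q$, because $(X+q)/(X+a+q)$ is. Hence the constraint for all $q\in[\ell_s,u_s]$ is equivalent to the constraint at $q=\ell_s$ alone, and $u_s$ drops out. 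At $q=\ell_s$ the left side is strictly decreasing in $a$. The admissible set is therefore $[0,a_{\max}]$, where $a_{\max}$ is the positive root of
\[
(X+a)(X+a+\ell_s)=\frac{X(X+\ell_s)}{1-\tau},
\]
namely $a_{\max}=\tfrac12\bigl(-(2X+\ell_s)+\sqrt{\ell_s^2+4X(X+\ell_s)/(1-\tau)}\bigr)$. This depends only on $X,\tau,\ell_s$ and proves (i).

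For (ii) I compute the sandwich profit in units of $X$. The attacker receives $y_a=Ya/(X+a)$, the victim then moves the pool to $(X+a+q,\,XY/(X+a+q))$, and selling $y_a$ back returns $a(X+a+q)^2/\bigl(X(X+a)+a(X+a+q)\bigr)$. Subtracting $a$ gives
\[
\pi(a,q)=\frac{aq\,(2X+a+q)}{(X+a)^2+aq}.
\]
The argument then needs two monotonicity facts.
\begin{enumerate}[label=(\alph*),leftmargin=2em]
\item $\pi$ is strictly increasing in $q$ for each $a>0$. This should be a routine sign check of $\partial_q\pi$. It implies that the worst case over $I_s$ is attained at $q=\ell_s$.
\item $\pi(\cdot,q)$ is strictly increasing on $[0,a_{\max}]$ for every $q\ge \ell_s$. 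This is the step I expect to be the main obstacle.
\end{enumerate}
Fact (b) is delicate because $\pi$ is not globally increasing in $a$: for very large $a$ it tends to $q$ from above. My first attempt is to write the numerator of $\partial_a\pi$ as a polynomial $N(a,q)$ and show $N>0$ on the region $\{(a,q): q\ge \ell_s,\ (X+a)(X+a+\ell_s)\le X(X+\ell_s)/(1-\tau)\}$. Since $a\le a_{\max}$ keeps $a$ comparable to $\tau X$, I expect $N$ to factor or be bounded below after substituting the admissibility inequality. If this fails, the fallback is to parametrize by $t=\rho_a(\ell_s)\in[1-\tau,1]$ and prove monotonicity in $t$.

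Granting (b), $a_{\max}$ is the unique pointwise maximizer for every $q$. Unique pointwise maximization transfers to expected profit under any $\mu_s$ and to worst-case profit, since the worst case is attained at $\ell_s$ for every $a$. Substituting the binding relation $(X+a_{\max})(X+a_{\max}+\ell_s)=X(X+\ell_s)/(1-\tau)$ into $\pi(a_{\max},\ell_s)$ and simplifying should give $\tau\ell_s(X+\ell_s)/(X+\tau\ell_s)$. I would verify this by eliminating $a_{\max}^2$ via the quadratic.

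For (iii), parts (i)–(ii) show that the best achievable guaranteed net payoff is $g(\ell_s)-c$, where $g(\ell)=\tau\ell(X+\ell)/(X+\tau\ell)$. One checks that $g$ is strictly increasing on $(0,\infty)$, with $g(0^+)=0$ and $g\to\infty$. A bundle that is positive at every consistent size exists iff $g(\ell_s)>c$, i.e.\ iff $\ell_s>\lambda_c$. Here $\lambda_c$ is the positive root of $\tau\ell^2+\tau(X-c)\ell-cX=0$, with $\lambda_0=0$. The worst case is attained at $q=\ell_s$, so strictness is preserved.
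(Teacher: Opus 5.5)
Your route is the paper's. You reduce robust admissibility to $q=\ell_s$ by monotonicity of the output ratio in $q$, and take the positive root of the resulting quadratic. Your expression agrees with \eqref{eq:amax}, since $(2X+\ell_s)^2+4\tau X(X+\ell_s)/(1-\tau)=\ell_s^2+4X(X+\ell_s)/(1-\tau)$. You then use monotonicity of $\pi$ in both arguments, evaluate at $q=\ell_s$ with the binding identity, and solve $W>c$ as a quadratic in $\ell_s$. Part (i) and the threshold computation in (iii) are fine as written.

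The flaw is step (b). You leave it unproven, and your reason for doubting it is false. The profit is strictly increasing in $a$ on all of $[0,\infty)$ and tends to $q$ from \emph{below}, not above. With $D=(X+a)^2+aq$,
\[
\pi(a,q)=\frac{aq(2X+a+q)}{D}=q-\frac{qX^2}{(X+a)^2+aq},
\]
so $\pi<q$, and $\pi$ rises with $D$, which rises with $a$. Equivalently, in the quotient-rule numerator for $\partial_a\pi$ the combination $qD-N$ collapses to $qX^2$. This gives $\partial_a\pi=qX^2(2X+2a+q)/D^2>0$, which is exactly how the paper proves \cref{lem:profit-monotone}. So you need neither the restriction to the admissible region nor the reparametrization by $t$; your first plan would have found a numerator that is positive everywhere.

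The same identity also settles your (a). For $a>0$, $\pi=q\bigl(1-X^2/D\bigr)$ is a product of two positive factors, each increasing in $q$. With (b) in hand, the transfer to the expected and worst-case objectives goes through as you describe. The closed form you left as ``should give'' follows as in \eqref{eq:cancel}. Set $\sigma=X+a_{\max}$ and use $\sigma(\sigma+\ell_s)=X(X+\ell_s)/(1-\tau)$ twice. The numerator of $\pi(a_{\max},\ell_s)$ becomes $\ell_s\tau X(X+\ell_s)/(1-\tau)$ and the denominator becomes $X(X+\tau\ell_s)/(1-\tau)$.
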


Theorem~A collects \Cref{thm:admissibility,thm:optimal,thm:threshold}. Only
the lower endpoint matters because a front-run hurts small victims most: the
victim's output after the front-run, as a fraction of its honest output, rises
with its size (\Cref{lem:ratio}). For a half-open interval $(0,u_s]$, parts
(i) and (ii) hold with $\ell_s=0$ for pointwise and expected profit, and with
$c>0$ no such bundle exists (\Cref{cor:perfect} with $u_s$ in place of $Q$).
\Cref{fig:threshold} plots the threshold and the effect of leaking the leading
bits of the size.

\begin{figure}[t]
\centering
\includegraphics[width=\linewidth]{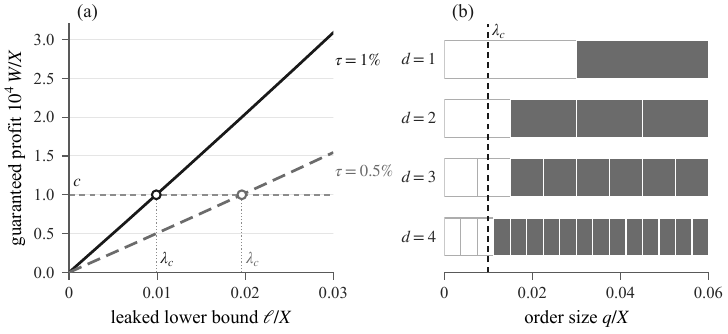}
\caption{(a) Guaranteed gross profit $W$ of the optimal robust sandwich against
the leaked lower bound $\ell$, for slippage tolerances $\tau=1\%$ and
$\tau=0.5\%$, with execution cost $c=10^{-4}X$ (dashed). A robust sandwich with
positive net payoff at every consistent size exists exactly when
$\ell>\lambda_c$ (\Cref{thm:threshold}); to first order $\lambda_c\approx
c/\tau$. (b) Leakage of the first $d$ bits of the size, for sizes up to
$Q=0.06X$ with $\tau=1\%$ and $c=10^{-4}X$. Shaded bins admit such a sandwich
(\Cref{prop:bitprefix}); the unshaded range shrinks toward
$\lambda_c\approx 0.0099X$ as $d$ grows and always contains $(0,\lambda_c]$.}
\label{fig:threshold}
\end{figure}

Our second result prices the execution right.

\begin{maintheorem}[Execution auction]
Let $m\ge 2$ symmetric risk-neutral traders bid for the slot, and let $R_s$ be
the posterior expected gross profit of the largest admissible front-run minus
the execution cost $c$. If $R_s>0$, every pure-strategy perfect Bayesian
equilibrium awards the slot to that front-run at the payment $R_s$, so the
auctioneer collects the entire expected net rent; if $R_s<0$, no equilibrium
places a sandwich.
\end{maintheorem}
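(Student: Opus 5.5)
The plan is to treat the game as a first-price common-value auction followed by a winner's choice of front-run, and to solve it by backward induction. Every trader shares the posterior $\mu_s$ and observes the same public information, so the continuation problem of whoever wins is identical across traders. Given part (ii) of Theorem~A (\Cref{thm:optimal}), the largest admissible front-run $a_{\max}(\ell_s,\tau)$ uniquely maximizes posterior expected profit. Sequential rationality in a perfect Bayesian equilibrium therefore forces the winner, after winning, to place exactly that front-run (or to abstain if its net value is negative). Its continuation value is then $\max\{R_s,0\}$ if abstention is allowed, or $R_s$ if the bundle is committed with the bid. I would fix the convention that a bid is a price for placing the bundle, so that a winner paying $b$ earns $R_s-b$.

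Next I reduce the auction to a Bertrand-type argument with common value $v=R_s$.

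\emph{Upper bound on the payment.} In any pure-strategy equilibrium, the winning bid $b^*$ satisfies $b^*\le v$. Otherwise the winner has a profitable deviation: bidding below the reserve (or $0$, or not bidding) gives payoff $0>v-b^*$.

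\emph{Lower bound on the payment.} Suppose $b^*<v$. If some other trader also bids $b^*$, ties are broken so that a tied trader wins with probability less than one, and that trader can raise its bid slightly to $b^*+\varepsilon<v$ and win with certainty, a strict gain. If the winner is unique, any loser can bid into $(b^*,v)$ and earn a positive payoff instead of $0$. Here $m\ge 2$ is essential. Hence $b^*=v$, the winner's net payoff is $0$, and the auctioneer collects $R_s$.

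\emph{Existence.} When $R_s>0$, the profile where at least two traders bid $R_s$ and the slot goes to one of them is an equilibrium. No one gains by raising the bid, which would mean winning at a loss, or by lowering it, which would mean losing. Together with the previous step, this shows every pure-strategy PBE awards the slot at price $R_s$ to the $a_{\max}$ front-run. That the front-run must be $a_{\max}$ follows from the uniqueness in \Cref{thm:optimal}: any other admissible front-run yields strictly lower expected profit, so the winner would deviate.

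For $R_s<0$, any placed sandwich gives the winner $R_s-b$. This is negative whenever $b\ge 0$, so the winner prefers abstaining or bidding nothing. I would assume bids are nonnegative, so the auctioneer never subsidizes.

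The main obstacle is modeling detail rather than mathematics: the bid space must be handled carefully. A continuum with first-price ties needs a tie-breaking rule, and the role of abstention after winning must be fixed. I would first try bids in $[0,\infty)$ with uniform random tie-breaking. If that clashes with the paper's formalization, the fallback is to argue directly from the undercutting and overbidding deviations, which hold for any tie-breaking rule giving ties positive probability. Posterior beliefs play no strategic role beyond $\mu_s$, since no private information is revealed on or off path. This makes the PBE refinement reduce to subgame perfection in the winner's front-run choice.
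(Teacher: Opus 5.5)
Your argument is the same Bertrand-type argument the paper uses. The winner's option to submit null gives $b\le R_s$. A losing trader's deviation to a bundle with front-run $a^\star$ and a slightly higher bid rules out $b<R_s$. Uniqueness in \Cref{thm:optimal} pins the front-run.

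There is one omitted case, and it is exactly where the word ``every'' in the claim bites. Both Bertrand steps start from ``the winning bid $b^*$'', so they presuppose that someone wins. Your existence paragraph does not fill this in. Exhibiting one equilibrium with a sandwich does not exclude a second equilibrium in which every trader submits null. The inference ``together with the previous step, this shows every pure-strategy PBE awards the slot'' therefore does not follow. The fix uses the deviation you already have. If all traders submit null and $R_s>0$, a single trader who bids $(a^\star,b)$ with $0\le b<R_s$ wins alone and earns $R_s-b>0$. This is the first line of the paper's proof of part (b). Existence, by contrast, is not part of the claim, although your profile with two bidders at $R_s$ is indeed an equilibrium.

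A smaller point concerns the model. In the paper there is no post-auction stage. Each trader simultaneously submits null or a bundle $(a_i,b_i)$ with $a_i\in\Aset(I_s)$. Ties are broken deterministically and may depend on the bundles. So sequential rationality does not pin down the front-run. Moreover, the winner deviating to $(a^\star,b)$ at the same bid need not keep winning a tie.

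You can repair this in either of two ways. One is to use a deviation to $(a^\star,b+\varepsilon)$ with $\varepsilon$ small. The other, as the paper does, is to fold the front-run into the bounds. A winner with bundle $(a,b)$ and $a\ne a^\star$ satisfies $0\le \E[\pi(a,q)\mid s]-b-c<R_s-b$. Hence $b<R_s$, and a loser's overbid to $(a^\star,b+\varepsilon)$ with $b+\varepsilon<R_s$ is profitable. Likewise, for $R_s<0$ the winner gets at most $R_s-b$, not exactly $R_s-b$. With these repairs your proof is complete.
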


Theorem~B is \Cref{thm:pbe}; the posterior enters only through $R_s$. With a
positive execution cost, a signal that reveals no positive lower bound supports
no guaranteed-profit sandwich, while post-trade arbitrage with gross profit up
to $q^2/(X+q)$ can survive (\Cref{cor:perfect,prop:boundary}).

Sandwich attacks on AMMs have been studied analytically under full
observability, most directly in Heimbach and Wattenhofer's sandwich game
\cite{HeimbachWattenhofer22}. Under partial observability the lower endpoint of
the leaked interval is the sufficient statistic: it alone determines robust
admissibility, the optimal robust attack, and the exact profitability
threshold, a leakage-aware analogue of the minimum profitable victim input in
public-order-flow analyses \cite{ZhouEtAl21,A2MM21}.

The baseline hides the size while the pair, the direction, and the slippage
tolerance are public, and the public direction is a substantive assumption. If
the direction may be either way and the first leg must be chosen before it is
revealed, no non-null first leg front-runs both directions
(\Cref{prop:direction}); a hidden slippage tolerance instead reduces to its
public lower bound (\Cref{prop:hidden-tau}). Hiding the direction thus blocks
the non-contingent robust sandwich studied here, and when the direction leaks
while the size is only interval-revealed, $\ell_s$ determines the exact
threshold. The closed forms use the algebra of the fee-free constant-product
invariant; \Cref{prop:fee} bounds the effect of swap fees in one direction, and
\Cref{rem:cfmm} discusses other pricing rules, for which we do not claim them.

\section{Related Work}
\label{sec:related}

Analytical work on sandwiching and transaction reordering under full or nearly
full observability provides the closest starting point. Daian et al.'s
\emph{Flash Boys 2.0} articulated the broader reordering perspective
\cite{DaianEtAl19}. Zhou et al.\ analyze sandwich attacks under public order
flow \cite{ZhouEtAl21}, and Heimbach and Wattenhofer study slippage-tolerance
choice in a sandwich game \cite{HeimbachWattenhofer22}. Park shows that under
transparent execution every liquidity-invariance pricing rule admits sandwich
attacks \cite{Park23}, and Kulkarni, Diamandis, and Chitra analyze routing and
reordering MEV in CFMMs, showing that the price of anarchy of routing is
constant when the impact of a sandwich attack is localized
\cite{KulkarniDiamandisChitra22}. In our notation, the
fully observed regime corresponds to point leakage $I=[q,q]$. Our results show
that under interval leakage the correct analogue of the minimum profitable
victim input intuition from A2MM \cite{A2MM21} is the leaked lower endpoint
$\ell$.

A second adjacent line studies market design and mechanism design for strategic
trading environments. Chan, Wu, and Shi analyze AMM mechanism design
\cite{ChanWuShi24}. Budish, Cramton, and Shim show how batch processing
destroys speed-based rents in another market-design setting
\cite{BudishCramtonShim15}. Wadhwa et al.\ study order-policy enforcement under
rationality and prove impossibility results in a fully rational setting
\cite{WadhwaEtAl23}. Kelkar et al.\ introduce order-fairness for Byzantine
consensus \cite{KelkarEtAl20}, Li et al.\ revisit transaction-fairness
definitions for blockchains \cite{LiEtAl23}, and Ferreira and Parkes design
verifiable sequencing rules that limit what a block producer gains by ordering
trades around a user \cite{FerreiraParkes23}. PROF studies protected order flow
in a profit-seeking PBS environment \cite{BabelEtAl24}. Transaction fee
mechanism design studies how block space should be allocated and priced when
block producers are strategic \cite{Roughgarden24,ChungShi23}, including
producers who extract MEV themselves \cite{BahraniGarimidiRoughgarden25}; the auctioneer of our execution-rights
auction sells one such slot, and \Cref{thm:pbe} shows that it collects the
entire expected net rent. The leakage map is the traders' information
structure, and Bergemann, Brooks, and Morris show how information structures
shape bidding and revenue in first-price auctions
\cite{BergemannBrooksMorris17}. Relative to these papers, our contribution is an exact evaluative theorem:
given a residual information
structure, when is the remaining leakage already enough to sustain robust
sandwiching in equilibrium?

Hidden-order and encrypted-mempool systems provide the closest systems
motivation. Ferveo formalizes mempool privacy via threshold decryption in BFT
networks \cite{BebelOjha22}, while Shutter gives a threshold-cryptography
approach to private transactions \cite{DziembowskiFaustLuhn24}. BlindPerm
combines encrypted mempools with permutation-based ordering for MEV mitigation
\cite{KavousiEtAl25}. Rondelet and Kilbourn argue that private mempools should
be evaluated through an economic lens, with sandwich attacks as a canonical
case \cite{RondeletKilbourn23}, and Heimbach and Wattenhofer survey defenses
against transaction reordering \cite{HeimbachWattenhoferSoK22}. Angeris, Evans,
and Chitra show that the usual CFMM implementations cannot keep traded
quantities hidden from an adversary who sees the remaining public information
\cite{AngerisEvansChitra21}, which complements \Cref{prop:boundary}: hiding an
order before execution does not remove the arbitrage its execution creates. Our
analysis abstracts away implementation details and treats the pre-trade privacy
layer solely through a leakage map.

\section{Preliminaries and Model}
\label{sec:model}

\subsection{Constant-Product AMM and Sandwich Bundles}

We work with a constant-product AMM holding reserves $(X,Y)\in\R_{>0}^2$ of
tokens $X$ and $Y$. The invariant is $XY=k$. This is the pricing rule of
Uniswap v2 \cite{AdamsZinsmeisterRobinson20} without its swap fee (\Cref{prop:fee} treats a fee withheld from the
reserves) and the usual
benchmark in analyses of sandwich attacks
\cite{ZhouEtAl21,HeimbachWattenhofer22}; it has the price impact and slippage
that sandwiching exploits and still admits exact solutions. We analyze a single victim order
that buys token $Y$ using token $X$. The opposite direction is symmetric under
token exchange. This isolates the residual-information regime in which the pair
and direction leak through the transaction interface but the exact size does
not; \cref{sec:extensions} discusses what changes when direction or slippage is
also hidden.

For an input of $\delta>0$ units of token $X$, the AMM returns
\begin{equation}
\Delta_Y(\delta;X,Y)
=
Y-\frac{XY}{X+\delta}
=
\frac{Y\delta}{X+\delta}.
\label{eq:buy-output}
\end{equation}
After this trade the reserves become
\(
(X+\delta, XY/(X+\delta)).
\)
Similarly, if the current reserves are $(X',Y')$ and a trader inputs
$\eta>0$ units of token $Y$, then the AMM returns
\begin{equation}
\Delta_X(\eta;X',Y')
=
X'-\frac{X'Y'}{Y'+\eta}
=
\frac{X'\eta}{Y'+\eta}.
\label{eq:sell-output}
\end{equation}

The victim order has hidden size $q>0$. If executed honestly from the initial
state, the victim receives
\begin{equation}
h(q)
=
\Delta_Y(q;X,Y)
=
\frac{Yq}{X+q}.
\label{eq:honest-output}
\end{equation}
The victim also posts a slippage tolerance $\tau\in(0,1)$ and therefore
requires at least $(1-\tau)h(q)$ output.

If the trader front-runs with size $a\ge 0$, it acquires
\begin{equation}
y_F(a)=\frac{Ya}{X+a}
\label{eq:front-y}
\end{equation}
units of token $Y$, after which the AMM state is
\(
(X+a, XY/(X+a)).
\)
The victim then receives
\begin{equation}
v(a,q)
=
\Delta_Y\!\left(q;X+a,\frac{XY}{X+a}\right)
=
\frac{XYq}{(X+a)(X+a+q)}.
\label{eq:victim-after-front}
\end{equation}

\begin{definition}[Admissible front-run for type $q$]
\label{def:type-admiss}
A front-run size $a\ge 0$ is \emph{admissible for victim type $q$} if the
victim order still executes, i.e.,
\begin{equation}
v(a,q)\ge (1-\tau) h(q).
\label{eq:type-admissibility}
\end{equation}
\end{definition}

When the trader later sells the front-run output $y_F(a)$ back into the AMM
after the victim trade, the resulting output in token $X$ is
\begin{equation}
x_B(a,q)
=
\Delta_X\!\left(y_F(a); X+a+q, \frac{XY}{X+a+q}\right)
=
\frac{a(X+a+q)^2}{(X+a)^2+aq}.
\label{eq:back-output}
\end{equation}
The trader's gross sandwich profit in token $X$ is therefore
\begin{equation}
\pi(a,q)
=
x_B(a,q)-a
=
\frac{aq(2X+a+q)}{(X+a)^2+aq}.
\label{eq:profit}
\end{equation}
The profit does not depend on $Y$. Scaling the initial $Y$-reserve scales the
amount of $Y$ bought in the front-run and the $Y$-side reserve at the unwind by
the same factor, and profit is measured in token $X$, so the factor cancels.

\subsection{Leakage Signals}

The victim order is not public. Instead, the information structure
reveals a signal $s$ that leaks only interval information about $q$.

\begin{definition}[Interval leakage map]
\label{def:leakage}
An \emph{interval leakage map} is a function
$L:(0,Q]\to\Sig$ for some finite signal set $\Sig$, together with an interval
assignment $s\mapsto I_s\subseteq (0,Q]$ such that
\[
q\in I_{L(q)}
\qquad\text{for every }q\in(0,Q].
\]
Each $I_s$ is either a closed interval $[\ell_s,u_s]$ with $\ell_s>0$ or a
half-open interval $(0,u_s]$, which reveals no positive lower bound; in the
second case we write $\ell_s=0$. The signal is public before traders bid for execution rights. The victim size $q$
itself is not.
\end{definition}

\begin{definition}[Robust admissibility under signal $s$]
\label{def:robust-admissibility}
Fix a signal $s$ with interval $I_s$. A front-run size $a\ge 0$ is
\emph{robustly admissible} for $s$ if it is admissible for every
$q\in I_s$. The robustly admissible set is
\[
\Aset(I_s)=\{a\ge 0 : a \text{ is admissible for every } q\in I_s\}.
\]
\end{definition}

\subsection{Execution-Rights Auction}

Nature draws a victim size $q\in(0,Q]$ from a common prior and reveals the
public signal $s=L(q)$. Every trader and the auctioneer observe $s$ together
with the pool reserves $(X,Y)$, the trading pair, the direction, and the
slippage tolerance $\tau$; nobody observes $q$. All of them update to a
posterior $\mu_s$ supported on $I_s$. We assume that after the signal, the
traders are symmetric and risk neutral.

There are $m\ge 2$ traders and one auctioneer, who sells a single execution
slot: the indivisible right to place a front-run immediately before the victim
and the unwind immediately after it. Each trader $i$ simultaneously submits
either a null action or a pair $(a_i,b_i)$ where $a_i\in\Aset(I_s)$ is a
robustly admissible front-run size and $b_i\ge 0$ is a payment to the
auctioneer. The auctioneer selects the admissible bundle with the highest
payment, breaking ties arbitrarily but deterministically, and the winner pays
its own bid, so the auction is first-price (pay-as-bid). We assume a fixed
execution cost $c\ge 0$ for a non-null sandwich attempt.

If trader $i$ wins with bundle $(a_i,b_i)$ and the hidden type is $q$, its
utility is $\pi(a_i,q)-b_i-c$; the payment $b_i$ goes to the auctioneer and is
separate from the trading profit $\pi$. If it loses, its utility is $0$. The
auctioneer's utility is the payment of the chosen bundle, or $0$ if all
traders submit null.

Robust admissibility and risk neutrality act at different layers. The first
defines the action set $\Aset(I_s)$, and the second ranks the actions in it by
$\E_{q\sim\mu_s}[\pi(a,q)]-b-c$. A trader therefore does not evaluate profit by
a max-min criterion: the model imposes an execution guarantee and uses
posterior expected utility among the bundles that meet it. The guarantee asks a
distribution-free question, namely how much leakage suffices for one bundle
that remains a valid sandwich at every size consistent with the signal.
\Cref{sec:relax} replaces it by a bound on the failure probability, which, for
a posterior with full support on $I_s$, moves $\ell_s$ to a lower quantile of
the posterior.

\begin{remark}[Equilibrium convention]
\label{rem:pure}
Throughout, \emph{equilibrium} means \emph{pure-strategy} perfect Bayesian
equilibrium. The results characterize equilibrium outcomes of the post-signal
game; mixed strategies would enter with discrete bids, asymmetric costs, or
random tie-breaking, which the model excludes.
\end{remark}

\section{Main Results}
\label{sec:main}

\begin{theorem}[Exact robust admissibility]
\label{thm:admissibility}
Fix a signal interval $I=[\ell,u]\subseteq(0,Q]$. In the fee-free
constant-product AMM model above, the robustly admissible set is the interval
\[
\Aset(I)= [0,a_{\max}(\ell,\tau)],
\]
where
\begin{equation}
a_{\max}(\ell,\tau)
=
\frac{
- (2X+\ell)
+
\sqrt{(2X+\ell)^2 + \frac{4\tau X(X+\ell)}{1-\tau}}
}{2}.
\label{eq:amax}
\end{equation}
Equivalently, $a$ is robustly admissible if and only if
\begin{equation}
(X+a)(X+a+\ell)\le \frac{X(X+\ell)}{1-\tau}.
\label{eq:admissibility-ineq}
\end{equation}
For a half-open interval $I=(0,u]$, the robustly admissible set is
$\Aset(I)=[0,a_0]$, where $a_0:=a_{\max}(0,\tau)=X\bigl((1-\tau)^{-1/2}-1\bigr)$.
\end{theorem}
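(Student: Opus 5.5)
The plan is to reduce robust admissibility to a single-type condition at the lower endpoint, using monotonicity of the victim's relative output in $q$; the paper calls this monotonicity \Cref{lem:ratio}, and I will prove it inline here.

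First I rewrite the single-type condition \eqref{eq:type-admissibility} in closed form. Substituting \eqref{eq:honest-output} and \eqref{eq:victim-after-front}, cancelling the common positive factor $Yq$, and cross-multiplying gives that $a$ is admissible for type $q$ if and only if
\[
r_a(q):=\frac{X(X+q)}{(X+a)(X+a+q)}\ \ge\ 1-\tau .
\]
Next I show that, for fixed $a\ge 0$, $r_a(q)$ is nondecreasing in $q$. The map $q\mapsto (X+q)/(X+a+q)$ is a Möbius function whose derivative is $a/(X+a+q)^2\ge 0$, and the remaining factor $X/(X+a)$ does not depend on $q$. Hence the constraint is tightest at the smallest $q$ in the interval.

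For $I=[\ell,u]$ this gives $\Aset(I)=\{a\ge 0: r_a(\ell)\ge 1-\tau\}$, which is exactly \eqref{eq:admissibility-ineq}. The left side $(X+a)(X+a+\ell)$ is strictly increasing in $a\ge 0$. At $a=0$ it equals $X(X+\ell)$, which is strictly below the right side because $\tau>0$. The solution set is therefore $[0,a^*]$, where $a^*$ is the nonnegative root of
\[
a^2+(2X+\ell)a+X(X+\ell)-\frac{X(X+\ell)}{1-\tau}=0 .
\]
The constant term equals $-\tau X(X+\ell)/(1-\tau)$, so the discriminant is $(2X+\ell)^2+4\tau X(X+\ell)/(1-\tau)$. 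The quadratic formula then yields \eqref{eq:amax}, and the root is positive since the constant term is negative. The upper endpoint $u$ never enters, which is the point of the theorem.

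For $I=(0,u]$ the infimum $0$ is not attained, so the monotonicity argument needs a limit. Since $r_a(q)$ is nondecreasing in $q$, the requirement $r_a(q)\ge 1-\tau$ for all $q\in(0,u]$ is equivalent to $\inf_{q>0}r_a(q)=\lim_{q\downarrow0}r_a(q)=X^2/(X+a)^2\ge 1-\tau$. Sufficiency follows because each $r_a(q)$ is at least this limit. Necessity follows because if the limit is below $1-\tau$, continuity produces a small $q\in(0,u]$ that violates the constraint. The resulting condition is $(X+a)^2\le X^2/(1-\tau)$, giving $a_0=X((1-\tau)^{-1/2}-1)$, which agrees with \eqref{eq:amax} at $\ell=0$.

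The main care point is this half-open case, namely checking that the non-strict inequality survives the limit so the set is closed at $a_0$. The algebra elsewhere is routine.
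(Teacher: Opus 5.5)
Your proof is correct and follows the paper's argument. You rewrite admissibility as $r_a(q)\ge 1-\tau$, use monotonicity of $r_a$ in $q$ (the paper's \Cref{lem:ratio}) to reduce to the lower endpoint $\ell$, and solve the resulting quadratic. The one small difference is in the half-open case. The paper gets necessity by letting $\ell\downarrow 0$ in $a\le a_{\max}(\ell,\tau)$, and gets sufficiency by checking $a_0$ directly via $X+a_0\le X/(1-\tau)$. You instead pass to the limit $q\downarrow 0$ in the ratio and get sufficiency from monotonicity, which is equally valid and slightly cleaner.
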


\begin{theorem}[Optimal robust size and exact robust value]
\label{thm:optimal}
Fix a signal interval $I=[\ell,u]\subseteq(0,Q]$ and any posterior
$\mu$ supported on $I$.
\begin{enumerate}[label=(\alph*),leftmargin=2em]
\item The gross sandwich profit $\pi(a,q)$ is strictly increasing in $a$ and in
$q$ on $\R_{>0}^2$.
\item Consequently, the unique optimizer of each of the following problems is
\begin{equation}
a^\star(I)=a_{\max}(\ell,\tau):
\label{eq:a-star}
\end{equation}
\[
\max_{a\in \Aset(I)} \pi(a,q)
\quad \text{for every fixed } q\in I,
\]
\[
\max_{a\in \Aset(I)} \E_{q\sim \mu}[\pi(a,q)],
\]
and
\[
\max_{a\in \Aset(I)} \min_{q\in I}\pi(a,q).
\]
\item The exact worst-case gross profit under signal $I$ is
\begin{equation}
W(I)
:=
\max_{a\in \Aset(I)} \min_{q\in I}\pi(a,q)
=
\pi(a^\star(I),\ell)
=
\frac{\tau \ell (X+\ell)}{X+\tau \ell}.
\label{eq:robust-value}
\end{equation}
\item The optimal posterior expected gross profit is
\[
V_\mu(I)
:=
\max_{a\in \Aset(I)} \E_{q\sim \mu}[\pi(a,q)]
=
\E_{q\sim\mu}[\pi(a^\star(I),q)].
\]
\end{enumerate}
For a half-open interval $I=(0,u]$ and any posterior $\mu$ on $I$, parts (a)
and (d) and the pointwise and expected problems of part (b) hold with
$\ell=0$, so $a^\star(I)=a_0$.
\end{theorem}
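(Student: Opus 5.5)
The proof has three parts. First, establish the monotonicity claims of part (a) by direct differentiation of the closed form \eqref{eq:profit}. Then derive parts (b)–(d) from (a) combined with \Cref{thm:admissibility}, which gives $\Aset(I)=[0,a_{\max}(\ell,\tau)]$, a compact interval whose right endpoint is the largest feasible size. Finally, compute $W(I)$ in closed form by exploiting the fact that $a^\star$ makes the admissibility constraint \eqref{eq:admissibility-ineq} bind at $q=\ell$.

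For part (a), write $\pi(a,q)=N/D$ with $N=aq(2X+a+q)$ and $D=(X+a)^2+aq$. A cleaner route than raw quotient differentiation is to use the equivalent form $\pi=x_B-a$ together with $x_B(a,q)=a(X+a+q)^2/((X+a)^2+aq)$.

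In $q$, put $t=X+a+q$, so that $D=(X+a)(X+a+q)-Xq$. Then
\[
\partial_q \pi \propto 2t\,D - t^2 a,
\]
and the sign of this expression reduces to the sign of $2D-at$. Expanding gives $2(X+a)^2+2aq-a(X+a+q)=(X+a)^2+X(X+a)+aq+\dots>0$, which holds on the positive orthant.

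In $a$, I expect $\partial_a N\cdot D-N\,\partial_a D$ to reduce to a polynomial in $X,a,q$ with all coefficients nonnegative. I will expand it and check this, and I expect this bookkeeping to be the main (purely computational) obstacle. If the expansion turns out messy, I will instead substitute $b=X+a$ and verify positivity as a polynomial in $b$, $q$, and $X$, using $b>X$.

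Given (a), part (b) follows directly. For each fixed $q\in I$, the map $a\mapsto\pi(a,q)$ is strictly increasing, so on $[0,a_{\max}]$ its unique maximizer is $a_{\max}$. Taking expectations preserves strict monotonicity, since $\mu$ is a probability measure and $\pi$ is bounded on the compact set $\Aset(I)\times I$. For the max-min problem, monotonicity in $q$ gives $\min_{q\in I}\pi(a,q)=\pi(a,\ell)$, which is strictly increasing in $a$, so its unique maximizer is again $a_{\max}$. Part (d) is then immediate.

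For part (c), let $a=a^\star$. \Cref{thm:admissibility} gives the equality $(X+a)(X+a+\ell)=X(X+\ell)/(1-\tau)$. Using $D=(X+a)(X+a+\ell)-X\ell$, I will rewrite $\pi(a,\ell)=\frac{(X+a+\ell)^2 a}{D}-a$ and eliminate $a$ via the binding constraint. A convenient parametrization is $P:=X+a$, with $P(P+\ell)=X(X+\ell)/(1-\tau)$. Then $D=P(P+\ell)-X\ell$, and the expression should simplify to $\tau\ell(X+\ell)/(X+\tau\ell)$. As a sanity check, this value should equal the victim's shortfall accounting: the victim loses exactly $\tau h(\ell)$ in $Y$, and this converts back to $X$ at the post-trade price.

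For the half-open case $I=(0,u]$, \Cref{thm:admissibility} gives $\Aset(I)=[0,a_0]$. The pointwise and expected problems go through verbatim with $a_0$ in place of $a_{\max}$. The max-min statement is not asserted there, because $\inf_q\pi=0$ is not attained.
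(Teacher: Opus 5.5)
Your proposal is correct and follows the paper's proof essentially step for step. You use quotient-rule monotonicity for (a); the $a$-derivative you deferred is immediate, since $\partial_a D=2X+2a+q$ divides $\partial_a N=q(2X+2a+q)$ and $qD-N=qX^2$, giving $\partial_a\pi=qX^2(2X+2a+q)/D^2$. You then use strict monotonicity over $[0,a_{\max}(\ell,\tau)]$ for (b) and (d), and for (c) you eliminate $a^\star$ through the binding identity $P(P+\ell)=X(X+\ell)/(1-\tau)$, exactly as the paper does with $\sigma=X+a^\star$. Your sanity check is also exact, provided ``converts back'' means selling the victim's shortfall $\tau h(\ell)$ of $Y$ through the pool at its honest post-trade state $(X+\ell,XY/(X+\ell))$, which returns precisely $\tau\ell(X+\ell)/(X+\tau\ell)$; converting linearly at a marginal price would not give an exact identity.
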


\begin{theorem}[Distribution-free leakage threshold]
\label{thm:threshold}
Fix a signal interval $I=[\ell,u]\subseteq(0,Q]$ and a fixed execution cost
$c\ge 0$. There exists a robustly admissible sandwich bundle whose net payoff is
strictly positive for \emph{every} hidden type $q\in I$ if and only if
\begin{equation}
W(I)>c.
\label{eq:threshold-wc}
\end{equation}
Equivalently, this holds if and only if
\begin{equation}
\ell>\lambda_c
:=
\frac{-(X-c) + \sqrt{(X-c)^2 + \frac{4cX}{\tau}}}{2}.
\label{eq:lambda-c}
\end{equation}
\end{theorem}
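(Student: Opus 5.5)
The plan reduces the existence question to the value $W(I)$ of \Cref{thm:optimal}(c), and then solves $W(I)>c$ for $\ell$.

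\emph{Step 1: equivalence with $W(I)>c$.} A bundle $a\in\Aset(I)$ has strictly positive net payoff at every $q\in I$ if and only if $\pi(a,q)-c>0$ for all $q\in I$. By \Cref{thm:optimal}(a), $\pi(a,\cdot)$ is increasing in $q$. Since $I=[\ell,u]$ is closed with $\ell>0$, the infimum over $I$ is attained at $q=\ell$, so the condition reads $\pi(a,\ell)>c$. If such an $a$ exists, then
\[
W(I)\ge \min_{q\in I}\pi(a,q)=\pi(a,\ell)>c .
\]
Conversely, if $W(I)>c$, then $a^\star(I)=a_{\max}(\ell,\tau)$ from \Cref{thm:admissibility} is robustly admissible and satisfies $\min_{q\in I}\pi(a^\star,q)=W(I)>c$. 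Because the payment $b$ can be taken to be $0$ here, this bundle witnesses existence. The case $a=0$ needs no separate treatment: $\pi(0,q)=0\le c$, so the null front-run never qualifies, which is consistent with the equivalence.

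\emph{Step 2: solving for $\ell$.} Using the closed form \eqref{eq:robust-value}, and noting that $X+\tau\ell>0$, the inequality $W(I)>c$ is equivalent to
\[
\tau\ell(X+\ell)>c(X+\tau\ell),
\qquad\text{i.e.}\qquad
g(\ell):=\tau\ell^2+\tau(X-c)\ell-cX>0 .
\]
The quadratic $g$ has positive leading coefficient and satisfies $g(0)=-cX\le 0$. Hence its larger root is nonnegative, and for $\ell>0$ we have $g(\ell)>0$ exactly when $\ell$ exceeds that root. Dividing by $\tau$, the larger root is
\[
\frac{-(X-c)+\sqrt{(X-c)^2+4cX/\tau}}{2}=\lambda_c,
\]
which matches \eqref{eq:lambda-c}.

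\emph{Edge case $c=0$.} Here $\lambda_0=\bigl(-X+|X|\bigr)/2=0$. The condition $\ell>0$ then always holds, which agrees with $W(I)>0$ for every $\ell>0$.

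The main obstacle is the root bookkeeping in Step 2. One has to check that $\ell>\lambda_c$ is also necessary, which means excluding the possibility that $\ell$ lies below the smaller root. This follows because the smaller root is $\le 0<\ell$, since $g(0)\le 0$. One also has to handle $c>X$, where $X-c<0$; there the larger root is still the $+$ root and remains positive, so the formula is unchanged. Everything else is inherited directly from \Cref{thm:admissibility,thm:optimal}.
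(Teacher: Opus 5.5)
Your proposal is correct and follows the paper's proof essentially step for step: you reduce existence to $W(I)>c$ via \Cref{thm:optimal}, then clear the positive denominator $X+\tau\ell$ and locate the larger root of the same convex quadratic $\tau\ell^2+\tau(X-c)\ell-cX$. The differences are cosmetic. You place the smaller root at or below $0$ using $g(0)=-cX\le 0$, whereas the paper uses the product of roots $-cX/\tau$ and handles $c=0$ separately, and you spell out the forward direction through $W(I)\ge\pi(a,\ell)$, which the paper leaves implicit.
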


\begin{corollary}[Perfect hiding eliminates universal robust sandwiching]
\label{cor:perfect}
Suppose a signal $s_\bot$ reveals no positive lower bound on the victim size,
in the sense that $I_{s_\bot}=(0,Q]$. Then
\[
\inf_{\ell\downarrow 0} W([\ell,Q])=0,
\]
and if $c>0$, no universally profitable robust sandwich exists after signal
$s_\bot$. If, moreover,
\[
R_{s_\bot}:=\sup_{a\in\Aset(I_{s_\bot})}\E_{q\sim\mu_{s_\bot}}[\pi(a,q)]-c<0,
\]
then every pure-strategy equilibrium continuation after $s_\bot$ has no
on-path sandwich.
\end{corollary}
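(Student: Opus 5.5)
The corollary has three claims, and I would prove them in order. Each follows from \Cref{thm:admissibility,thm:optimal,thm:threshold} together with the monotonicity of $\pi$.

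\emph{The infimum.} I would use the closed form \eqref{eq:robust-value}. For each $\ell\in(0,Q]$, \Cref{thm:optimal}(c) gives $W([\ell,Q])=\tau\ell(X+\ell)/(X+\tau\ell)$. This is a positive continuous function of $\ell$ that tends to $0$ as $\ell\downarrow 0$. It is also increasing, since both $\ell(X+\ell)$ and $\ell/(X+\tau\ell)$ increase in $\ell$. So the infimum over $\ell>0$ equals the limit, which is $0$.

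\emph{No universally profitable robust sandwich when $c>0$.} By \Cref{thm:admissibility}, $\Aset((0,Q])=[0,a_0]$. Fix any $a\in[0,a_0]$ and any bid $b\ge 0$; the net payoff at type $q$ is $\pi(a,q)-b-c$.
\begin{itemize}
\item If $a=0$, then $\pi=0$ and the net payoff is at most $-c<0$.
\item If $a>0$, formula \eqref{eq:profit} gives $\pi(a,q)=aq(2X+a+q)/((X+a)^2+aq)$, which tends to $0$ as $q\downarrow0$. Since every small $q>0$ lies in $(0,Q]$, we can choose $q$ with $\pi(a,q)<c$, and then the net payoff is negative.
\end{itemize}
Hence no bundle is strictly profitable at every consistent size. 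The same conclusion follows from \Cref{thm:threshold} applied to $[\ell,Q]$ as $\ell\downarrow0$: any bundle profitable on all of $(0,Q]$ would be profitable on $[\ell,Q]$ for every $\ell$, forcing $\ell>\lambda_c$ for every $\ell>0$. That is impossible because $\lambda_c>0$ when $c>0$. I will give the direct limit argument as the primary proof since it is cleaner.

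\emph{No on-path sandwich when $R_{s_\bot}<0$.} \Cref{thm:optimal}(d), with the half-open extension, gives $\sup_{a\in\Aset}\E_{\mu}[\pi(a,q)]=\E_\mu[\pi(a_0,q)]$, so $R_{s_\bot}<0$ means $\E_\mu[\pi(a,q)]-c<0$ for every admissible $a$. Suppose some pure-strategy equilibrium has a sandwich on path, won by trader $i$ with bundle $(a_i,b_i)$. Trader $i$'s expected utility is $\E_\mu[\pi(a_i,q)]-b_i-c\le R_{s_\bot}<0$. Deviating to the null action yields $0$, since a null action never wins and a loser's utility is $0$. This is a profitable deviation, a contradiction. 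Because every post-signal continuation inherits this deviation argument via sequential rationality, no PBE places a sandwich.

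The main subtlety is the second step: it requires $\pi(a,q)\to0$ uniformly enough as $q\downarrow0$ for each fixed $a$. For fixed $a$ this is immediate from \eqref{eq:profit}, whose denominator stays at least $X^2>0$. The other points needing care are $a=0$ and remembering that $b\ge0$ only hurts the bidder.
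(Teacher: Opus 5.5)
Your proof is correct. The first and third claims follow the paper. For the first, the closed form for $W([\ell,Q])$ is positive and tends to $0$. Monotonicity is not needed, and your justification of it misfactors $W$: its increasing factors are $\tau(X+\ell)$ and $\ell/(X+\tau\ell)$. The third claim is the deviation-to-null argument of \Cref{thm:pbe}(a). That argument needs only $\E[\pi(a,q)\mid s_\bot]\le R_{s_\bot}+c$, which comes from the definition of the supremum, not from \Cref{thm:optimal}(d).

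For the second claim, the paper uses what you list as your backup. A bundle profitable on $(0,Q]$ is robustly admissible and profitable on each $[\ell,Q]$, so \Cref{thm:threshold} forces $W([\ell,Q])>c$ for every $\ell>0$, which contradicts the first claim. Your primary argument instead fixes the front-run and lets $q\downarrow0$ in \eqref{eq:profit}. That is more elementary and more general. It uses neither the admissible set nor the closed forms for $W$ and $\lambda_c$, and it covers every front-run size and bid. It would also carry over to any pricing rule in which, for a fixed front-run, profit vanishes with the victim size. It needs only pointwise convergence, so ``uniformly'' in your last paragraph is a misnomer. The paper's route buys coherence: the second claim becomes a consequence of the first, and the corollary reads as the $\ell\downarrow0$ limit of the threshold theorem, with profitability already failing on $[\ell,Q]$ once $\ell\le\lambda_c$.
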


\Cref{thm:threshold} and the first two claims of \Cref{cor:perfect} concern
the distribution-free guarantee $W$, which asks for positive net payoff at
every size consistent with the signal. Whether the auction places a sandwich
depends instead on the posterior continuation value $R_s$ of \Cref{thm:pbe}.
The two criteria part ways under perfect hiding. By
\Cref{thm:admissibility}, the front-run $a_0>0$ stays robustly admissible on
$(0,Q]$. Under a posterior with mass one at $Q$, a trader who bids $(a_0,0)$
while every other trader submits null wins alone and earns $\pi(a_0,Q)-c$,
which is positive for small $c$. No profile in which every trader submits null
is then an equilibrium, so every equilibrium places a sandwich on path.
Perfect hiding thus removes the uniform guarantee, while attacks that pay in
expectation can remain.

\begin{theorem}[Pure-strategy execution-auction equilibrium]
\label{thm:pbe}
Fix a signal $s$ with interval $I_s$ and posterior $\mu_s$, and let
\begin{equation}
R_s := V_{\mu_s}(I_s)-c.
\label{eq:Rs}
\end{equation}
In the pure-strategy game with $m\ge 2$ symmetric traders:
\begin{enumerate}[label=(\alph*),leftmargin=2em]
\item If $R_s<0$, then every pure-strategy perfect Bayesian equilibrium has
no on-path sandwich bundle after signal $s$. If $R_s=0$, then a no-attack pure
equilibrium exists, and every on-path sandwich equilibrium outcome yields zero
auctioneer revenue and zero trader payoffs.
\item If $R_s>0$, then every pure-strategy perfect Bayesian equilibrium has an
on-path winning bundle $(a^\star(I_s),b^\star_s)$ with
\begin{equation}
b^\star_s = R_s.
\label{eq:fee-star}
\end{equation}
In particular, the winner's expected utility is $0$, every losing trader gets
$0$, and the auctioneer captures the full positive continuation rent $R_s$.
\end{enumerate}
\end{theorem}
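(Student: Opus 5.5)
The proof is a Bertrand-style argument. Everything rests on one reduction: a bundle's value depends only on its expected gross profit.

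\textbf{Reduction.} For a bundle $(a,b)$ with $a\in\Aset(I_s)$, the winner's expected utility is $\E_{\mu_s}[\pi(a,q)]-b-c$. By \Cref{thm:optimal}(b),(d), this is at most $R_s-b$, with equality exactly when $a=a^\star(I_s)$. So a trader who wins with payment $b$ earns at most $R_s-b$. The benchmark deviation is to bid $(a^\star(I_s),b')$ for some $b'$ slightly above the current winning payment. Every non-null bid is paid only when it wins, and losers get $0$, so the analysis reduces to the highest payment and the identity of the winner.

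\textbf{Part (a).} Suppose $R_s<0$ and some equilibrium has an on-path winning bundle $(a,b)$. Then the winner earns at most $R_s-b<0$, so deviating to null is strictly profitable, a contradiction. Now suppose $R_s=0$. In the all-null profile, any deviation $(a,b)$ wins and yields at most $-b\le 0$, so all-null is an equilibrium. In an on-path sandwich equilibrium, the winner earns $\le -b\le 0$ and must earn $\ge 0$ (otherwise null is better). Hence $b=0$ and $\E[\pi(a,q)]=V$, so the auctioneer's revenue and every trader's payoff are $0$.

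\textbf{Part (b).} Let $R_s>0$. There are four steps.

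\begin{enumerate}[label=(\roman*),leftmargin=2em]
\item Some sandwich is placed. If all traders submitted null, any trader could deviate to $(a^\star,R_s/2)$, win alone, and earn $R_s/2>0$.
\item The winning payment is $b^\star\ge R_s$. Suppose instead the winning payment $b^\star$ is less than $R_s$. Because $m\ge2$, some trader $j$ is a loser with payoff $0$. That trader can bid $(a^\star,b')$ with $b^\star<b'<R_s$, which strictly exceeds every bid and so wins regardless of tie-breaking, earning $R_s-b'>0$. This contradicts equilibrium.
\item The winning payment satisfies $b^\star\le R_s$. The winner's utility is at most $R_s-b^\star$ and must be nonnegative, or the winner would switch to null.
\item The winning front-run is $a^\star$. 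Since $b^\star=R_s$, the winner's utility $\E[\pi(a,q)]-c-R_s$ must be $\ge0$. By \Cref{thm:optimal}(d) this forces $\E_{\mu_s}[\pi(a,q)]=V_{\mu_s}(I_s)$. Strict monotonicity of $\pi$ in $a$ (\Cref{thm:optimal}(a)) then makes $a^\star$ the unique expected maximizer, so $a=a^\star(I_s)$.
\end{enumerate}

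Together these give winner utility $0$, loser utility $0$, and auctioneer revenue $R_s$.

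\textbf{Half-open intervals.} For $I_s=(0,u]$, the same argument applies with $a^\star=a_0$, using the half-open clause of \Cref{thm:optimal}. Perfect Bayesian equilibrium adds nothing beyond Bayes-Nash here: the posterior $\mu_s$ is fixed by the public signal, and there is no further private information or later move.

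\textbf{Main obstacle.} The delicate step is (ii), undercutting, under deterministic tie-breaking. We must ensure that the deviating loser's bid strictly exceeds $b^\star$ while staying below $R_s$. This is possible because bids are continuous and $b^\star<R_s$ is strict. We also need a loser to exist, which is exactly where $m\ge2$ is used.
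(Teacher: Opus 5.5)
Your proposal is correct and follows essentially the same Bertrand-style argument as the paper's proof. It uses the same ingredients: the bound $\E[\pi(a,q)\mid s]-b-c\le R_s-b$ with equality only at $a^\star(I_s)$, the null action as an outside option, and undercutting by a losing trader (who exists because $m\ge 2$). The only difference is order. You pin down $b=R_s$ first and then read off $a=a^\star(I_s)$ from the winner's participation constraint, whereas the paper first rules out $a\neq a^\star$ with a separate undercutting deviation. Your ordering is marginally more economical but is not a different route.
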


\begin{proposition}[Bit-prefix leakage]
\label{prop:bitprefix}
Assume the global support is $(0,Q]$ and the leakage signal reveals the first
$d\ge 1$ bits of the normalized trade size, thereby inducing dyadic intervals
\[
I_j = \left[\frac{jQ}{2^d}, \frac{(j+1)Q}{2^d}\right]
\qquad\text{for } j=1,\dots,2^d-1,
\]
and $I_0=(0,Q/2^d]$. Then for $1\le j\le 2^d-1$ a universally profitable
robust sandwich exists on bin $I_j$ if and only if
\[
\frac{jQ}{2^d}>\lambda_c.
\]
If $c>0$, the lowest bin never supports universal robust sandwiching, and the
zero-bit case recovers \cref{cor:perfect}.
\end{proposition}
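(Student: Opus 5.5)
The proposition follows by applying \Cref{thm:threshold} to each bin separately, and \Cref{thm:admissibility} together with \Cref{cor:perfect} to the lowest bin.

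For $1\le j\le 2^d-1$, the bin $I_j=[jQ/2^d,(j+1)Q/2^d]$ is a closed interval. It lies in $(0,Q]$ because $j\ge 1$ and $j+1\le 2^d$, and its lower endpoint is $\ell_j=jQ/2^d>0$. It therefore satisfies the hypotheses of \Cref{thm:threshold}. That theorem gives a robustly admissible bundle with strictly positive net payoff at every $q\in I_j$ if and only if $\ell_j>\lambda_c$, that is, $jQ/2^d>\lambda_c$. The upper endpoint $(j+1)Q/2^d$ plays no role, because both the admissible set (\Cref{thm:admissibility}) and the worst-case value $W$ (\Cref{thm:optimal}(c)) depend only on $\ell_j$.

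For the lowest bin $I_0=(0,Q/2^d]$ with $c>0$, I would repeat the argument of \Cref{cor:perfect} with $u=Q/2^d$ in place of $Q$.
\begin{enumerate}[label=(\roman*),leftmargin=2em]
\item By \Cref{thm:admissibility}, $\Aset(I_0)=[0,a_0]$.
\item Take any $a\in[0,a_0]$ and any $q\in I_0$. By \Cref{thm:optimal}(a), $\pi(a,q)$ is increasing in $q$. From the closed form \eqref{eq:profit}, for fixed $a$, $\pi(a,q)\to 0$ as $q\downarrow 0$.
\item Hence $\inf_{q\in I_0}\pi(a,q)=0<c$ for every admissible $a$. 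So for each bundle some consistent $q$ gives net payoff $\pi(a,q)-c<0$, and no universally profitable robust sandwich exists on $I_0$.
\end{enumerate}
The null front-run $a=0$ is covered by the same computation, since $\pi(0,q)=0$. One could also note that $0<\lambda_c$ when $c>0$, which would make the lowest bin the formal case $j=0$ of the threshold criterion. That shortcut is not legitimate on its own, however, because $I_0$ is half-open and \Cref{thm:threshold} is stated only for closed intervals. The direct infimum argument above is what covers this case.

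For the zero-bit case $d=0$, there is a single bin $I_0=(0,Q]$, which reveals no positive lower bound. The lowest-bin argument then reads exactly as \Cref{cor:perfect}.

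The only step that needs care is the lowest bin, where the half-open interval falls outside the scope of \Cref{thm:threshold}. Even there the argument needs only the limit $\pi(a,q)\to 0$ as $q\downarrow 0$ and strict positivity of $c$, so I expect no real obstacle.
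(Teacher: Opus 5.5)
Your proposal is correct and follows the paper's proof: \Cref{thm:threshold} on each closed bin $I_j$ with $j\ge 1$, the perfect-hiding argument of \Cref{cor:perfect} with $Q/2^d$ in place of $Q$ for $I_0$, and $d=0$ giving $I_{s_\bot}$. The only difference is cosmetic: for $I_0$ you use the pointwise limit $\pi(a,q)\to 0$ as $q\downarrow 0$ directly, which holds for every $a\ge 0$ whether or not it is admissible. The corollary's proof instead goes through $W([\ell,Q/2^d])\to 0$ and \Cref{thm:threshold} on closed subintervals.
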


\begin{proposition}[Boundary theorem: hiding order contents does not remove post-trade arbitrage]
\label{prop:boundary}
Suppose every order size leads to the perfectly hiding signal $s_\bot$ of
\cref{cor:perfect}, and $c>\pi(a_0,Q)$. Then no universally profitable robust
sandwich exists after $s_\bot$, and no pure-strategy equilibrium has an on-path
sandwich. Yet if an external market trades token $Y$ at the pool's initial
price $X/Y$, a trader who observes the reserves \emph{after} a victim trade of
size $q>0$ earns a strictly positive gross arbitrage profit, up to $q^2/(X+q)$.
\end{proposition}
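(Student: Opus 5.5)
The statement has three claims, and I will prove them in order. The first two come straight from earlier results. The third is a direct reserve computation.

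\emph{No universal robust sandwich.} By \Cref{cor:perfect}, no universally profitable robust sandwich exists after $s_\bot$ whenever $c>0$. Here $c>\pi(a_0,Q)\ge 0$, since $\pi\ge0$ on $\R_{\ge0}^2$ and $a_0>0$. So the hypothesis $c>0$ holds and the first claim follows.

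\emph{No equilibrium sandwich.} By \Cref{thm:admissibility}, $\Aset((0,Q])=[0,a_0]$. By \Cref{thm:optimal}(a), $\pi(a,q)$ is increasing in both arguments, so $\pi(a,q)\le \pi(a_0,Q)$ for every $a\in[0,a_0]$ and $q\in(0,Q]$. Integrating against the posterior $\mu_{s_\bot}$ gives
\[
R_{s_\bot}=\sup_{a\in\Aset}\E_{q\sim\mu_{s_\bot}}[\pi(a,q)]-c\le \pi(a_0,Q)-c<0 .
\]
The last inequality of \Cref{cor:perfect} (equivalently \Cref{thm:pbe}(a)) then rules out any on-path sandwich in a pure-strategy equilibrium. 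The only care needed is the endpoint $a=0$, where $\pi=0$; the sup bound covers it.

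\emph{Post-trade arbitrage.} I will compute the arbitrage directly.
\begin{itemize}
\item After the victim's trade the reserves are $(X+q,\,XY/(X+q))$. The pool price of $Y$ in units of $X$ is $(X+q)^2/(XY)$, which exceeds the external price $X/Y$.
\item The arbitrageur therefore buys $Y$ externally at $X/Y$ and sells $\eta$ units of $Y$ into the pool, receiving $\Delta_X(\eta;X+q,XY/(X+q))$. The profit is
\[
g(\eta)=\Delta_X(\eta;X+q,XY/(X+q))-\eta X/Y .
\]
\item The function $g$ is concave in $\eta$ and satisfies $g'(0)=(X+q)^2/(XY)-X/Y>0$, so some small $\eta>0$ gives $g(\eta)>0$. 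This proves strict positivity.
\item At the optimum the pool's marginal price equals the external price. On the invariant $k=XY$, that point is the original state $(X,Y)$. So the optimal $\eta$ is $\eta^*=Y-XY/(X+q)=Yq/(X+q)$, which is exactly the $Y$-amount $h(q)$ the victim removed.
\item The trader receives $X$-tokens equal to $(X+q)-X=q$ and pays $\eta^*X/Y=Xq/(X+q)$. The maximal profit is
\[
q-\frac{Xq}{X+q}=\frac{q^2}{X+q}.
\]
\end{itemize}
Hence the arbitrage profit is strictly positive and at most $q^2/(X+q)$, with that value attained.

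The only mildly delicate step is making sure "up to" means the supremum over $\eta$. Concavity, from $\Delta_X$ being concave in $\eta$, together with the first-order condition at $\eta^*$, settles that the supremum is attained at $\eta^*$.
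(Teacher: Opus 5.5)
Your proof is correct and follows the paper's route. The first two claims reduce to \Cref{cor:perfect} via $\Aset((0,Q])=[0,a_0]$ and the monotonicity bound $\E[\pi(a,q)\mid s_\bot]\le\pi(a_0,Q)<c$, and the arbitrage claim is the same direct computation of the profit from selling $\eta$ units of $Y$ into the post-trade pool, maximized at $\eta=Yq/(X+q)$ with value $q^2/(X+q)$. You also make explicit two things the paper leaves implicit: that $c>\pi(a_0,Q)>0$ supplies the $c>0$ hypothesis of the corollary, and that concavity plus the first-order condition (which restores the original reserves) identifies the maximizer. The paper instead states the exact positivity region $\eta<Yq(2X+q)/(X(X+q))$.
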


\section{Proof Overview}
\label{sec:overview}

Two monotonicity properties drive the results; the algebra is deferred to
Appendix~\ref{app:main-proofs}.

\begin{lemma}[Relative-output monotonicity]
\label{lem:ratio}
For every fixed $a>0$, the ratio
\[
r(a,q):=\frac{v(a,q)}{h(q)}
=
\frac{X(X+q)}{(X+a)(X+a+q)}
\]
is strictly increasing in $q>0$; for $a=0$ it equals $1$.
\end{lemma}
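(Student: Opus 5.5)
The plan is to verify the closed form for $r(a,q)$ first, and then treat the monotonicity claim as a one-variable calculus fact about a ratio of two linear functions of $q$.

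\textbf{Closed form.} By \eqref{eq:victim-after-front} and \eqref{eq:honest-output},
\[
v(a,q)/h(q)=\frac{XYq}{(X+a)(X+a+q)}\cdot\frac{X+q}{Yq}=\frac{X(X+q)}{(X+a)(X+a+q)},
\]
since $Y$ and $q>0$ cancel. Setting $a=0$ gives $X(X+q)/(X(X+q))=1$ immediately.

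\textbf{Monotonicity.} Fix $a>0$. The factor $X/(X+a)$ does not depend on $q$, so it suffices to show that $g(q):=(X+q)/(X+a+q)$ is strictly increasing in $q$. I would write
\[
g(q)=1-\frac{a}{X+a+q}.
\]
Because $a>0$ and the denominator $X+a+q$ is positive and strictly increasing in $q$, the subtracted term $a/(X+a+q)$ is strictly decreasing. Hence $g$ is strictly increasing. Alternatively, $g'(q)=a/(X+a+q)^2>0$. Multiplying by the positive constant $X/(X+a)$ preserves strict monotonicity, which gives the claim.

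No step here is a genuine obstacle. The only point needing care is the simplification of $v/h$. This requires confirming that \eqref{eq:victim-after-front} correctly evaluates $\Delta_Y$ at the post-front-run reserves $(X+a, XY/(X+a))$, which amounts to one line of algebra from \eqref{eq:buy-output}.

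For later use in \Cref{thm:admissibility}, I would also record two consequences of the lemma. First, admissibility $r(a,q)\ge 1-\tau$ is hardest to satisfy at the smallest $q$, so robust admissibility on $[\ell,u]$ reduces to admissibility at $q=\ell$. Second, on $(0,u]$ it reduces to admissibility in the limit $q\downarrow 0$, where $r\to X^2/(X+a)^2$.
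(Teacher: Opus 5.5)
Your proof is correct and follows essentially the paper's argument: factor $r(a,q)=\frac{X}{X+a}\cdot\frac{X+q}{X+a+q}$, note the first factor is constant in $q$, and differentiate the second to get $a/(X+a+q)^2>0$. The explicit check of the closed form for $v/h$ and the rewriting $1-a/(X+a+q)$ are harmless additions that the paper does not spell out.
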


\begin{lemma}[Profit monotonicity]
\label{lem:profit-monotone}
The sandwich profit $\pi(a,q)$ from \eqref{eq:profit} is strictly increasing in
$a>0$ and in $q>0$.
\end{lemma}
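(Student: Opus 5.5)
We prove \Cref{lem:profit-monotone} by differentiating the closed form \eqref{eq:profit}. Write
\[
\pi(a,q)=\frac{N}{D},\qquad N=aq(2X+a+q),\qquad D=(X+a)^2+aq .
\]
For $a,q>0$ both $N$ and $D$ are positive, so the sign of each partial derivative equals the sign of the numerator of the quotient rule. We treat the two variables separately.

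\textbf{Monotonicity in $q$.} We have
\[
\partial_q N=a(2X+a+2q),\qquad \partial_q D=a .
\]
After factoring out $a$, the sign of $\partial_q\pi$ is that of
\[
(2X+a+2q)\bigl((X+a)^2+aq\bigr)-q(2X+a+q)a .
\]
Expanding, the $aq$ cross terms cancel partly, since $(2X+a+2q)aq-(2X+a+q)aq=aq^2$. The numerator is therefore
\[
(2X+a+2q)(X+a)^2+aq^2>0 .
\]
So $\pi$ is strictly increasing in $q$.

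\textbf{Monotonicity in $a$.} Here
\[
\partial_a N=q(2X+2a+q),\qquad \partial_a D=2(X+a)+q .
\]
After dividing by $q$, the sign of $\partial_a\pi$ is that of
\[
(2X+2a+q)\bigl((X+a)^2+aq\bigr)-a(2X+a+q)(2X+2a+q).
\]
Both terms contain the factor $(2X+2a+q)$, so this equals
\[
(2X+2a+q)\bigl[(X+a)^2+aq-a(2X+a+q)\bigr]=(2X+2a+q)X^2>0 .
\]
Hence $\pi$ is strictly increasing in $a$.

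A cleaner route avoids this computation. Write $x_B/a=(X+a+q)^2/\bigl((X+a)^2+aq\bigr)$; monotonicity in $a$ then reduces to the same bracket, which we verify directly. The only step requiring care is the expansion in $a$, where it is easy to miss that the bracket collapses to $X^2$. We will double-check it at $q\to0$, where both sides vanish consistently. We will also check a numeric instance, e.g.\ $X=1$, $a=1$, $q=1$: then $\pi=1\cdot1\cdot4/5=0.8$, and moving to $a=2$ gives $2\cdot5/11\approx0.909$, which is increasing as claimed.

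Since the derivatives are strictly positive on the open quadrant $\R_{>0}^2$, strict monotonicity follows. By continuity it extends to $a\ge0$ with $\pi(0,q)=0$, which is what \Cref{thm:optimal}(a) uses.
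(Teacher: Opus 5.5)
Your proof is correct and is essentially the paper's argument: both apply the quotient rule to $\pi=N/D$ and reduce the $a$-numerator to $qX^2(2X+2a+q)$ (via $qD-N=qX^2$) and the $q$-numerator to $a\bigl((2X+a+2q)(X+a)^2+aq^2\bigr)$. The closing asides are unnecessary, and you can drop the $q\to 0$ ``check'' because it tests nothing. The bracket tends to $2(X+a)X^2\neq 0$, and only the common factor $q$ vanishes.
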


Because the trader does not know the exact victim size, its front-run must be
safe for every type in the leaked interval, and by \cref{lem:ratio} this
constraint binds at the smallest one. Robust admissibility thus reduces to a
condition at the lower endpoint $\ell$, and solving the resulting quadratic
gives \cref{thm:admissibility}. Since profit rises in the front-run
(\cref{lem:profit-monotone}), the largest robust front-run is optimal for the
pointwise, expected, and worst-case objectives of \cref{thm:optimal}. The worst
case is at $q=\ell$, where the boundary identity
$\sigma(\sigma+\ell)=X(X+\ell)/(1-\tau)$ for $\sigma=X+a_{\max}(\ell,\tau)$ turns
the numerator and the denominator of the profit formula into multiples of
$X/(1-\tau)$:
\begin{equation}
\pi\bigl(a_{\max}(\ell,\tau),\ell\bigr)
=\frac{\ell(\sigma-X)(\sigma+X+\ell)}{\sigma^2+(\sigma-X)\ell}
=\frac{\ell\,\tau X(X+\ell)/(1-\tau)}{X(X+\tau\ell)/(1-\tau)}
=\frac{\tau\ell(X+\ell)}{X+\tau\ell}.
\label{eq:cancel}
\end{equation}
Solving $W(I)>c$ gives \cref{thm:threshold}, and the limit
$\ell\downarrow 0$ gives \cref{cor:perfect}.

The equilibrium theorem is then one-dimensional. Because every trader agrees
after signal $s$ that the optimal admissible bundle uses $a^\star(I_s)$ and has
continuation value $R_s=V_{\mu_s}(I_s)-c$, competition among at least two
symmetric traders is Bertrand-like. If $R_s<0$, no one attacks, and at $R_s=0$
an attack earns zero surplus. If $R_s>0$, a trader can win alone when nobody
bids, any winning fee below $R_s$ can be beaten by a rival, and a winner paying
more than $R_s$ gains by dropping out, which pins down the equilibrium payoffs
(Appendix~\ref{app:pbe-proof}). The bit-prefix and post-trade-arbitrage
propositions apply the threshold and the constant-product price movement.
Complete proofs are in Appendices~\ref{app:main-proofs}
and~\ref{app:extension-proofs}.

\section{Extensions, Boundary Cases, and Conservative Variants}
\label{sec:extensions}

The exact threshold in \cref{thm:threshold} depends only on the input-token
reserve $X$, the slippage tolerance $\tau$, the lower endpoint $\ell$ of the
signal interval, and the fixed execution cost $c$. It does not depend on the
upper endpoint $u$, because the trader worries only about the smallest hidden
trade compatible with the signal. Perfect hiding eliminates universal robust
sandwiching for every $c>0$ only when the global type space has
no positive lower bound. If every relevant victim order is known a priori to
satisfy $q\ge q_{\min}>0$, complete hiding still reveals $\ell=q_{\min}$, and
universal robust sandwiching survives whenever $q_{\min}>\lambda_c$.

\subsection{Relaxing Robustness}
\label{sec:relax}

The baseline model insists that a bundle remain valid for \emph{every} hidden
type in the leaked interval. This is the strongest notion of feasibility, and
it is exactly why \cref{thm:threshold} is distribution free. If a trader is
willing to tolerate rare failures and the posterior has full support on the
leaked interval, the lower endpoint $\ell$ is replaced by a lower posterior
quantile.

\begin{definition}[$\varepsilon$-admissibility]
\label{def:epsilon-admiss}
Fix a signal interval $I=[\ell,u]$, a posterior $\mu$ on $I$ whose cumulative
distribution function $F_\mu$ is continuous and strictly increasing on $I$, and a
tolerance $\varepsilon\in[0,1)$. A front-run size $a\ge 0$ is
\emph{$\varepsilon$-admissible} if
\[
\mu\bigl(\{q\in I : v(a,q)\ge (1-\tau)h(q)\}\bigr)\ge 1-\varepsilon.
\]
For $\varepsilon=0$, this reduces to robust admissibility.
\end{definition}

\begin{proposition}[Chance-constrained admissibility]
\label{prop:chance}
Let $I$, $\mu$, and $F_\mu$ be as in \cref{def:epsilon-admiss}. For
$\varepsilon\in[0,1)$ let
\[
q_\varepsilon := \inf\{z\in I : F_\mu(z)\ge \varepsilon\}
\]
be the lower $\varepsilon$-quantile. Then the $\varepsilon$-admissible set is
\[
\Aset_\varepsilon(I,\mu)=[0,a_{\max}(q_\varepsilon,\tau)].
\]
In particular, allowing a positive failure probability weakly enlarges the
feasible set, and the robust case is recovered at $\varepsilon=0$ with
$q_0=\ell$.
\end{proposition}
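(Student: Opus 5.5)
The plan is to reduce $\varepsilon$-admissibility to a condition on the lowest admissible type, exactly as in \cref{thm:admissibility}, and then use the continuity and strict monotonicity of $F_\mu$ to identify that condition with the quantile $q_\varepsilon$. Fix $a\ge 0$. If $a=0$, then $v(0,q)=h(q)$, so every $q\in I$ satisfies the constraint and $a$ is $\varepsilon$-admissible. For $a>0$, \cref{lem:ratio} says the ratio $r(a,q)$ is strictly increasing and continuous in $q$. Hence the success set $S(a):=\{q\in I: r(a,q)\ge 1-\tau\}$ is an up-set of $I$: it is empty, all of $I$, or $[t(a),u]$ for a unique threshold $t(a)\in(\ell,u]$, where $t(a)$ solves $r(a,t)=1-\tau$. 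So $\mu(S(a))=1-F_\mu(t(a))$ in the last case, using that $F_\mu$ is continuous and hence $\mu$ has no atoms.

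Next I would show that $\mu(S(a))\ge 1-\varepsilon$ holds if and only if $S(a)\supseteq[q_\varepsilon,u]$. Since $F_\mu$ is continuous and strictly increasing on $I$, with $F_\mu(\ell)=0$ and $F_\mu(u)=1$, the quantile $q_\varepsilon$ is the unique point with $F_\mu(q_\varepsilon)=\varepsilon$; for $\varepsilon=0$ this gives $q_0=\ell$.

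\begin{itemize}
\item If $S(a)=[t,u]$, then $1-F_\mu(t)\ge 1-\varepsilon$ holds if and only if $F_\mu(t)\le\varepsilon=F_\mu(q_\varepsilon)$, which by strict monotonicity holds if and only if $t\le q_\varepsilon$.
\item If $S(a)=I$, the condition holds trivially.
\item If $S(a)=\emptyset$, then $\mu(S(a))=0<1-\varepsilon$, so the condition fails.
\end{itemize}

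In every case, $\varepsilon$-admissibility is equivalent to $q_\varepsilon\in S(a)$, that is, to $a$ being admissible for type $q_\varepsilon$. For the case $\varepsilon=0$, the point $q_0=\ell$ must itself lie in the success set. This is consistent, because $F_\mu(\ell)=0$ makes the condition $t\le \ell$ force $S(a)=I$.

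Finally, admissibility for the single type $q_\varepsilon>0$ is, by \cref{thm:admissibility} applied to the degenerate interval $[q_\varepsilon,q_\varepsilon]$, equivalent to $a\in[0,a_{\max}(q_\varepsilon,\tau)]$. This gives $\Aset_\varepsilon(I,\mu)=[0,a_{\max}(q_\varepsilon,\tau)]$. For the enlargement claim, I would note that $q_\varepsilon$ is nondecreasing in $\varepsilon$. I would also check that $a_{\max}(\ell,\tau)$ is nondecreasing in $\ell$. The cleanest route is \cref{lem:ratio}: larger $\ell$ relaxes the binding constraint \eqref{eq:admissibility-ineq}. Alternatively, one can differentiate the quadratic $(X+a)(X+a+\ell)=X(X+\ell)/(1-\tau)$ implicitly.

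The main obstacle is the endpoint bookkeeping: the non-strict inequality in the constraint, the closedness of $S(a)$, and the inf-definition of the quantile must all line up. This matters most at $\varepsilon=0$, where $q_0=\ell$ must be shown to lie in the success set rather than only in its closure. Continuity of $r$ in $q$ settles this, since it makes $S(a)$ closed.
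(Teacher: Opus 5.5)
Your proof is correct and follows essentially the same route as the paper. You use \cref{lem:ratio} to make the success set an up-set of $I$; continuity and strict monotonicity of $F_\mu$ then turn the probability constraint into the condition that the threshold lie at or below $q_\varepsilon$; and \cref{thm:admissibility} identifies the resulting feasible set. The differences are cosmetic: you apply \cref{thm:admissibility} to the point interval $[q_\varepsilon,q_\varepsilon]$ rather than to $[q_\varepsilon,u]$ as the paper does (equivalent by \cref{lem:ratio}), and you spell out the monotonicity behind the enlargement claim, which the paper leaves implicit.
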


The proof is the same lower-tail monotonicity argument as in
\cref{thm:admissibility}, with $\ell$ replaced by the lower posterior quantile;
see Appendix~\ref{app:extension-proofs}.

\begin{remark}[Why the robust threshold is distribution free]
\label{rem:distribution-free}
\Cref{thm:threshold} depends only on $\ell$ because it asks for positive net
payoff for every hidden type. Once the attacker instead maximizes expected
profit with occasional failures, or once failed bundles incur a separate revert
loss, the relevant objective depends on the full posterior on $q$, not just its
support minimum. Costs need the same care. If one insists on uniform
profitability against every cost realization in a bounded set, the robust
benchmark replaces $c$ by a worst-case upper bound. A random cost independent of
$q$ and priced in expectation is replaced by its mean, and the threshold stays
distribution free; a cost that depends on $q$ and is priced in expectation
brings in the posterior, and the threshold is no longer distribution free.
\end{remark}

\subsection{Hidden Metadata Beyond Size}

The base model assumes that size is partially hidden while direction and
slippage are public.

\begin{proposition}[Direction ambiguity rules out non-contingent front-runs]
\label{prop:direction}
Suppose the signal reveals only that a single swap of size $q\in I$ will occur,
while the direction may be either $X\!\to\!Y$ or $Y\!\to\!X$, and a trader must
commit to a non-contingent first leg before the direction is revealed. Call a
first leg a \emph{front-run against} a direction if, for every $q\in I$, it
lowers the output that a victim trading in that direction receives, relative to
honest execution. Then no non-null first leg is a front-run against both
directions.
\end{proposition}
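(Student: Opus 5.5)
We model the two directions separately and compare the sign of the effect of one fixed first leg on each. A non-null first leg is either an input of $a>0$ units of token $X$ (buying $Y$) or an input of $\eta>0$ units of token $Y$ (buying $X$). The plan is to show that the first kind strictly raises the output of a $Y\!\to\!X$ victim at every size, and the second kind strictly raises the output of an $X\!\to\!Y$ victim at every size. Either way the leg fails to be a front-run against the opposite direction for any $q\in I$, let alone every $q\in I$. Since $I$ is nonempty, failing at one $q$ already suffices.

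\textbf{Case 1: an $X$-input leg, $a>0$.} For the $X\!\to\!Y$ victim, \cref{lem:ratio} already gives $v(a,q)<h(q)$. So the leg is a front-run against that direction, and the work is all on the other side. For a $Y\!\to\!X$ victim of size $q$ (measured in token $Y$), the honest output is $\Delta_X(q;X,Y)=Xq/(Y+q)$ by \eqref{eq:sell-output}. After the leg the reserves are $(X+a,\,XY/(X+a))$, so the victim's output is
\[
\frac{(X+a)q}{XY/(X+a)+q}.
\]
Here the numerator has grown and the denominator has shrunk relative to the honest output. The victim's output is therefore strictly larger, and the leg is not a front-run against $Y\!\to\!X$.

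\textbf{Case 2: a $Y$-input leg, $\eta>0$.} This is the mirror image under token exchange. The post-leg reserves are $(XY/(Y+\eta),\,Y+\eta)$. An $X\!\to\!Y$ victim then receives
\[
\frac{(Y+\eta)q}{XY/(Y+\eta)+q},
\]
which strictly exceeds $h(q)=Yq/(X+q)$ by the same numerator/denominator comparison. So the leg is not a front-run against $X\!\to\!Y$.

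\textbf{Remaining points.} The only real obstacle is definitional rather than computational. First, we need to check that these two types exhaust the non-null non-contingent first legs. A single swap into the pool inputs exactly one token. If a combined leg is allowed, the plan is to reduce it to its net effect on reserves along the invariant curve: any non-null move either raises $X$ and lowers $Y$ or the reverse, and the comparison above depends only on that. Second, the unit in which the $Y\!\to\!X$ size $q$ is measured is irrelevant, because the inequality holds for every positive size.
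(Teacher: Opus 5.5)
Your proof is correct and follows essentially the paper's route: split on which token the first leg inputs, then show that an $X$-input leg strictly raises the output of every $Y\!\to\!X$ victim (and symmetrically for a $Y$-input leg), which already breaks the ``for every $q\in I$'' requirement. The differences are cosmetic: you compare the numerators and denominators of $\frac{(X+a)q}{XY/(X+a)+q}$ and $\frac{Xq}{Y+q}$ directly, whereas the paper writes out the explicit positive difference, and you spell out the mirror case instead of invoking token-exchange symmetry.
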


The proof observes that any non-null first leg moves the pool price in only one
direction and therefore cannot worsen both possible victim directions; see
Appendix~\ref{app:extension-proofs}.

\begin{proposition}[Hidden slippage reduces to a lower slippage bound]
\label{prop:hidden-tau}
Suppose that after signal $s$ the victim size lies in $I_s=[\ell_s,u_s]$ and
the victim's slippage tolerance lies in a public set $T_s\subset(0,1)$ with
lower endpoint $\underline{\tau}_s:=\inf T_s$. A front-run size $a$ is robustly
admissible for every pair $(q,\tau)\in I_s\times T_s$ if and only if
\[
(X+a)(X+a+\ell_s)\le \frac{X(X+\ell_s)}{1-\underline{\tau}_s}.
\]
Consequently the optimal robust size is $a_{\max}(\ell_s,\underline{\tau}_s)$,
and the worst-case gross value is
\[
\frac{\underline{\tau}_s\,\ell_s(X+\ell_s)}{X+\underline{\tau}_s\ell_s}.
\]
\end{proposition}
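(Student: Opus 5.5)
The plan is to reduce robust admissibility over the product set $I_s\times T_s$ to a single binding constraint at the corner $(\ell_s,\underline{\tau}_s)$. After that, \Cref{thm:admissibility,thm:optimal} apply with $\tau$ replaced by $\underline{\tau}_s$. For a fixed $a\ge 0$ and type $q$, admissibility \eqref{eq:type-admissibility} under tolerance $\tau$ reads $r(a,q)\ge 1-\tau$. The right side is increasing in $\tau$, so the constraint is tightest at the smallest tolerance. Hence $a$ is admissible for every $\tau\in T_s$ if and only if $r(a,q)\ge 1-\tau$ for all $\tau\in T_s$. Because $1-\tau$ approaches $1-\underline{\tau}_s$ as $\tau\downarrow\underline{\tau}_s$, this is equivalent to $r(a,q)\ge 1-\underline{\tau}_s$. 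This holds whether or not the infimum is attained, since the constraint set $\{r\ge 1-\tau\}$ is closed in the limit.

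Next, minimize over $q$. By \Cref{lem:ratio}, $r(a,q)$ is strictly increasing in $q$ for $a>0$, and it equals $1$ for $a=0$. The condition for all $q\in[\ell_s,u_s]$ is therefore exactly $r(a,\ell_s)\ge 1-\underline{\tau}_s$, which rearranges to the stated inequality
\[
(X+a)(X+a+\ell_s)\le \frac{X(X+\ell_s)}{1-\underline{\tau}_s}.
\]
The quantifiers over $q$ and $\tau$ commute, so taking the worst case in $\tau$ first and then in $q$ is legitimate. We also need $\underline{\tau}_s\in(0,1)$ for the right-hand side to be finite. The upper bound $\underline{\tau}_s<1$ holds because $T_s\subset(0,1)$ is nonempty. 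If $\underline{\tau}_s=0$, the set reduces to $\{0\}$, which is consistent with the formula $a_{\max}(\ell_s,0)=0$.

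The two remaining claims then follow from earlier results. The left side of the inequality is strictly increasing in $a\ge 0$, so the admissible set is $[0,a_{\max}(\ell_s,\underline{\tau}_s)]$. This is exactly the set of \Cref{thm:admissibility} with $\tau=\underline{\tau}_s$. By \Cref{lem:profit-monotone}, the profit $\pi$ is increasing in $a$ and does not depend on $\tau$. The argument of \Cref{thm:optimal}(b) therefore gives $a^\star=a_{\max}(\ell_s,\underline{\tau}_s)$ as the optimal robust size. The worst case over $q$ is at $q=\ell_s$, and the boundary identity computation \eqref{eq:cancel} with $\underline{\tau}_s$ in place of $\tau$ gives the value $\underline{\tau}_s\ell_s(X+\ell_s)/(X+\underline{\tau}_s\ell_s)$.

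The only delicate point is the non-attained infimum. It is handled by the closed-limit argument in the first paragraph; everything else is inherited from the earlier results.
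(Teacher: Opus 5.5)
Your proof is correct and follows essentially the same route as the paper's. You reduce robustness over $T_s$ to the single tolerance $\underline{\tau}_s$ by monotonicity in $\tau$, with a limiting argument when the infimum is not attained. You reduce robustness over $I_s$ to $q=\ell_s$ via \cref{lem:ratio}, invoke \cref{thm:admissibility,thm:optimal} at $\tau=\underline{\tau}_s$, and treat $\underline{\tau}_s=0$ separately, where only $a=0$ survives.

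One wording slip: in the form $r(a,q)\ge 1-\tau$ the right side is \emph{decreasing} in $\tau$. It is $X(X+q)/(1-\tau)$, in the rearranged form, that increases. Your conclusion that the constraint binds at the smallest tolerance is still right.
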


The proof reduces robustness over $T_s$ to the smallest feasible slippage
tolerance and then applies \cref{thm:admissibility,thm:optimal}; see
Appendix~\ref{app:extension-proofs}.

\subsection{Dimensionless Form and Conservative Variants}

The threshold \eqref{eq:lambda-c} is scale free.

\begin{proposition}[Scale-free threshold form]
\label{prop:scale}
Let $\widehat c:=c/X$ and $\widehat\lambda:=\lambda_c/X$. Then
\[
\widehat\lambda
=
\frac{-(1-\widehat c)+\sqrt{(1-\widehat c)^2+\frac{4\widehat c}{\tau}}}{2}.
\]
Moreover, as $\widehat c\downarrow 0$,
\[
\widehat\lambda=\frac{\widehat c}{\tau}+O(\widehat c^2).
\]
\end{proposition}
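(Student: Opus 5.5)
The first claim is pure substitution. Dividing numerator and denominator of \eqref{eq:lambda-c} by $X$ and writing $c=\widehat c X$ gives
\[
(X-c)^2=X^2(1-\widehat c)^2,\qquad \frac{4cX}{\tau}=\frac{X^2\cdot 4\widehat c}{\tau}.
\]
The square root therefore factors as $X\sqrt{(1-\widehat c)^2+4\widehat c/\tau}$, since $X>0$. Dividing by $X$ yields the displayed formula for $\widehat\lambda$.

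The asymptotic claim I will read off the defining quadratic rather than expanding the square root directly. The expression for $\widehat\lambda$ is the nonnegative root of
\[
g(\lambda):=\lambda^2+(1-\widehat c)\lambda-\widehat c/\tau=0 .
\]
Equivalently, $\lambda(1-\widehat c+\lambda)=\widehat c/\tau$, which rearranges to
\[
\widehat\lambda=\frac{\widehat c/\tau}{1-\widehat c+\widehat\lambda}.
\]
First I show $\widehat\lambda\to0$, and indeed $\widehat\lambda=O(\widehat c)$. For $\widehat c\le 1/2$ the denominator is at least $1/2$, because $\widehat\lambda\ge 0$. Hence $\widehat\lambda\le 2\widehat c/\tau$. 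Substituting this back gives
\[
\widehat\lambda-\frac{\widehat c}{\tau}
=\frac{\widehat c}{\tau}\cdot\frac{\widehat c-\widehat\lambda}{1-\widehat c+\widehat\lambda}.
\]
Its absolute value is at most
\[
\frac{\widehat c}{\tau}\cdot 2\bigl(\widehat c+2\widehat c/\tau\bigr)=O(\widehat c^2),
\]
where the constant depends on $\tau$, which is fixed.

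An alternative route is to Taylor expand $\sqrt{1+x}=1+x/2+O(x^2)$ with $x=\bigl((1-\widehat c)^2+4\widehat c/\tau\bigr)-1=O(\widehat c)$. The leading term gives
\[
\frac{-(1-\widehat c)+(1-\widehat c)+2\widehat c/\tau}{2}+O(\widehat c^2)=\frac{\widehat c}{\tau}+O(\widehat c^2).
\]
Care is needed because the factor $(1-\widehat c)$ inside the root must be pulled out, or the $-\widehat c$ terms must be tracked, so that they cancel to first order. The fixed-point argument avoids this bookkeeping.

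The only mildly delicate step is the fixed-point bound, which requires the positivity of the denominator; this is why I restrict to $\widehat c\le 1/2$. Everything else is routine algebra, and no earlier result beyond the definition \eqref{eq:lambda-c} from \Cref{thm:threshold} is needed.
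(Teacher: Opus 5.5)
Your proof is correct. The first part is exactly the paper's argument: substitute $c=\widehat c X$, pull $X$ out of the root, and divide by $X$. For the asymptotic part the paper simply invokes a first-order Taylor expansion of the square root at $\widehat c=0$, which is your ``alternative route.'' Your main argument instead rewrites the defining quadratic as the fixed-point identity $\widehat\lambda(1-\widehat c+\widehat\lambda)=\widehat c/\tau$ and bounds the error directly.

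The Taylor route is a one-liner but leaves the $O(\widehat c^{\,2})$ constant implicit. Yours gives an explicit, non-asymptotic bound, $|\widehat\lambda-\widehat c/\tau|\le \frac{2}{\tau}\bigl(1+\frac{2}{\tau}\bigr)\widehat c^{\,2}$ for $\widehat c\le 1/2$.

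Your identity in fact yields more, and the restriction $\widehat c\le 1/2$ is unnecessary. Since $g(\widehat c)=\widehat c(1-1/\tau)<0$ and $g$ is a convex quadratic, you get $\widehat\lambda>\widehat c$. Hence the denominator $1-\widehat c+\widehat\lambda$ exceeds $1$ for every $\widehat c>0$, and
\[
0<\frac{\widehat c}{\tau}-\widehat\lambda=\frac{\widehat c}{\tau}\cdot\frac{\widehat\lambda-\widehat c}{1-\widehat c+\widehat\lambda}<\frac{(1-\tau)\,\widehat c^{\,2}}{\tau^{2}} .
\]
This two-sided global bound shows that the paper's first-order rule $\lambda_c\approx c/\tau$ always slightly overstates the threshold. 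That is consistent with \cref{tab:calibration}, for instance $0.00990<0.01$ at $\tau=0.01$ and $c/X=10^{-4}$.
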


The derivation is a normalization of \eqref{eq:lambda-c} followed by a
first-order Taylor expansion; see Appendix~\ref{app:extension-proofs}.

\begin{table}[t]
\centering
\caption{Dimensionless leakage threshold $\lambda_c/X$ for representative cost
ratios and slippage tolerances.}
\label{tab:calibration}
\begin{tabular}{lcccc}
\toprule
$c/X$ & $\tau=0.001$ & $\tau=0.005$ & $\tau=0.01$ & $\tau=0.02$ \\
\midrule
$10^{-6}$ & 0.00100 & 0.00020 & 0.00010 & 0.00005 \\
$10^{-5}$ & 0.00990 & 0.00200 & 0.00100 & 0.00050 \\
$10^{-4}$ & 0.09162 & 0.01962 & 0.00990 & 0.00498 \\
\bottomrule
\end{tabular}
\end{table}

\begin{proposition}[Positive swap fees only make universal sandwiching harder]
\label{prop:fee}
Suppose every swap applies a fee rate $\rho\in[0,1)$ that is withheld from the
input and not added to the reserves, as in Uniswap v3 \cite{AdamsEtAl21}, so
that only the effective input $\alpha\delta$ with $\alpha=1-\rho$ enters the
pricing function and the reserves. Then:
\begin{enumerate}[label=(\roman*),leftmargin=2em]
\item the victim-admissibility condition becomes
\[
(X+\alpha a)(X+\alpha a+\alpha q)
\le
\frac{X(X+\alpha q)}{1-\tau};
\]
\item for every fixed $a,q>0$, the trader's gross sandwich profit is weakly
smaller than in the fee-free model; and
\item therefore the fee-free threshold from \cref{thm:threshold} is
conservative: if universal robust sandwiching is impossible at $\rho=0$, then
it is also impossible at fee rate $\rho$.
\end{enumerate}
\end{proposition}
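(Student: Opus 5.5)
We prove the three parts of \cref{prop:fee} in order, redoing the constant-product computations of \cref{sec:model} with every input $\delta$ replaced by its effective input $\alpha\delta$. The fee is withheld and not added to the reserves, so every swap is a fee-free constant-product swap of input $\alpha\delta$. This substitution drives the whole argument.

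\emph{Part (i).} After a front-run $a$ the reserves are $(X+\alpha a,\,XY/(X+\alpha a))$. The victim then receives
\[
v_\rho(a,q)=\frac{XY\alpha q}{(X+\alpha a)(X+\alpha a+\alpha q)}.
\]
The honest output is $h_\rho(q)=Y\alpha q/(X+\alpha q)$. Writing out $v_\rho\ge(1-\tau)h_\rho$ and cancelling the common positive factor $Y\alpha q$ gives the displayed inequality directly.

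\emph{Part (ii).} The front-run yields $y=Y\alpha a/(X+\alpha a)$, and after the victim the reserves are $(X+\alpha a+\alpha q,\;XY/(X+\alpha a+\alpha q))$. The unwind inputs $\alpha y$ units of $Y$ and returns
\[
x_B^\rho=\frac{(X+\alpha a+\alpha q)\,\alpha y}{XY/(X+\alpha a+\alpha q)+\alpha y}.
\]
Let $a'=\alpha a$ and $q'=\alpha q$. Then $y=Ya'/(X+a')$, so $x_B^\rho$ coincides with the fee-free unwind output $x_B(a',q')$ of \eqref{eq:back-output}, except that the $Y$ input is further scaled by $\alpha$. Because $\Delta_X$ is increasing in its input, $x_B^\rho\le x_B(a',q')$.

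The fee-charged gross profit is therefore
\[
\pi_\rho(a,q)=x_B^\rho-a\le x_B(a',q')-a'-(1-\alpha)a\le \pi(a',q').
\]
By \cref{lem:profit-monotone}, $\pi$ is increasing in both arguments, and $a'\le a$, $q'\le q$, so $\pi(a',q')\le\pi(a,q)$. This gives (ii). The step needing the most care is this bookkeeping: the trader still pays the full $a$ while only $a'$ enters the pool, which is what produces the extra nonpositive term $-(1-\alpha)a$.

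\emph{Part (iii).} We show that the fee-charged feasible set is contained in a fee-free feasible set. Take $a$ satisfying (i) for every $q\in[\ell,u]$. Then $a'=\alpha a$ satisfies the fee-free condition \eqref{eq:type-admissibility} at victim size $\alpha q$. By \cref{lem:ratio}, the binding size is the smallest one, $\alpha\ell$, so $a'\le a_{\max}(\alpha\ell,\tau)$.

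The worst-case net payoff then satisfies
\[
\min_q\pi_\rho(a,q)\le\pi(a',\alpha\ell)\le W([\alpha\ell,\alpha u])\le W([\ell,u]).
\]
The last inequality holds because the closed form \eqref{eq:robust-value} is increasing in $\ell$: it is a product of increasing factors over $X+\tau\ell$, and its derivative is easily checked to be positive.

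Now suppose universal robust sandwiching is impossible at $\rho=0$. By \cref{thm:threshold} this means $W([\ell,u])\le c$, hence $\min_q\pi_\rho(a,q)-c\le0$ for every fee-admissible $a$. So universal robust sandwiching is also impossible at fee rate $\rho$, which proves (iii).
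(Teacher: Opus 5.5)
Your proof is correct. Parts (i) and (ii) match the paper's argument, including the bookkeeping $\pi_\rho(a,q)\le\pi(\alpha a,\alpha q)-\rho a\le\pi(a,q)$.

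In part (iii) you compare \emph{values}. You bound the fee-charged worst case by $W([\alpha\ell,\alpha u])$, use that the closed form \eqref{eq:robust-value} increases in $\ell$, and then apply the ``if and only if'' of \cref{thm:threshold}. The paper instead transfers the \emph{bundle} by contrapositive. If $a$ is a universally profitable fee-charged sandwich on $I$, then $\alpha a$ is fee-free admissible at $\alpha\ell\le\ell$, so it lies in $\Aset(I)$ by \cref{lem:ratio}. It also satisfies $\pi(\alpha a,q)\ge\pi(\alpha a,\alpha q)\ge\pi_\rho(a,q)>c$ on $I$.

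The paper's route gives an explicit fee-free witness and never uses the closed form. It therefore rests only on the monotonicity backbone that \cref{rem:cfmm} expects to survive beyond constant-product pools. Your route gives a sharper statement that you left implicit: a fee-charged universal sandwich forces $W([\alpha\ell,\alpha u])>c$, that is, $\ell>\lambda_c/\alpha$.

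One small point: the phrase ``a product of increasing factors over $X+\tau\ell$'' is loose, since the denominator also grows. The clean check is $W'(\ell)=\tau(X^2+2X\ell+\tau\ell^2)/(X+\tau\ell)^2>0$. Alternatively, you can skip the shrunken interval entirely. Use $\alpha a\in\Aset([\ell,u])$ to get $\pi(\alpha a,\alpha\ell)\le\pi(\alpha a,\ell)\le W([\ell,u])$, which is exactly the paper's step.
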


The proof replaces each raw input by its effective input and compares every leg
with the fee-free benchmark; see Appendix~\ref{app:extension-proofs}. Under
Uniswap v2 accounting the fee stays in the pool, and the condition in (i)
becomes $(X+\alpha a)(X+a+\alpha q)\le X(X+\alpha q)/(1-\tau)$.

\begin{example}[Reading the scale-free threshold]
\label{ex:numeric}
Suppose the input-token reserve is $X=10^6$, the victim slippage tolerance is
$\tau=0.02$, and the total bundle cost is $c=100$ in the same units, so that
$c/X=10^{-4}$. \Cref{prop:scale} gives, to more digits than
\cref{tab:calibration},
\[
\frac{\lambda_c}{X}\approx 0.0049757,
\qquad\text{hence}\qquad
\lambda_c\approx 4{,}976.
\]
Thus a leaked lower bound of only a few hundred units does not support a
universally profitable robust sandwich, while a leaked lower bound around
$5\times 10^3$ crosses the threshold.
\end{example}

\begin{remark}[Multiple victims and cross-venue interactions]
\label{rem:multi}
The single-victim assumption is what makes the robust feasibility region
one-dimensional. If several same-direction victim buys execute after the
front-run on the same pool, feasibility involves every victim's own slippage
guard, and each later guard also depends on the orders ahead of it, while the
unwind profit depends on the cumulative downstream buy flow. Monotonicity then
suggests that the smallest consistent flow governs the worst-case profit, but
feasibility is no longer a condition on one number. Opposite-direction orders,
unknown interleavings, or cross-venue routing add further state variables to
both admissibility and unwind profit.
\end{remark}

\begin{remark}[What should persist beyond constant-product pools]
\label{rem:cfmm}
Among our arguments, the most robust is the lower-endpoint logic: whenever a
pre-trade hurts smaller same-direction victims more than larger ones, a leakage
signal should again matter mainly through a worst-case lower tail. The exact
cancellations behind \eqref{eq:robust-value}, however, are specific to the
constant-product invariant; stableswap-style or piecewise invariants are likely
to preserve only the monotonicity backbone, not the same closed-form threshold.
Uniswap v3 is covered locally, as a fee-free benchmark. Within a tick range
with fixed active liquidity $\Lambda$, a v3 pool trades against virtual
reserves $(X_v,Y_v)$ with $X_vY_v=\Lambda^2$ \cite{AdamsEtAl21}. If, for every front-run size
$a\in[0,a_{\max}(\ell_s,\tau)]$ and every victim size in $I_s$, the front-run,
the victim trade, and the unwind stay within one such range, the fee-free
results apply with $(X,Y)$ replaced by the virtual reserves; larger front-runs
stay inadmissible, because the victim's output falls as the front-run grows. In
particular, $u_s$ matters for whether the results apply. Every v3
pool charges a fee, and with a fee \Cref{prop:fee} gives only a one-sided
comparison. A swap that crosses an initialized tick changes the active
liquidity, the calculation becomes piecewise, and the closed-form threshold need
not survive; that case lies outside our theorems.
\end{remark}

\paragraph{What a private mempool must hide.}
Suppose execution is costly, $c>0$. To rule out sandwiches that profit at every
consistent size, a privacy layer may then reveal any upper bound on an order's
size but no lower bound above $\lambda_c$ (\Cref{thm:threshold}; for a revealed
upper bound $u$ alone, \Cref{cor:perfect} with $u$ in place of $Q$). To first
order in $c/X$ that threshold is $c/\tau$
(\Cref{prop:scale}), so halving the victims' slippage tolerance roughly doubles
the lower bound that can leak. Revealing the first $d$ bits of the size exposes
exactly the bins whose lower edge exceeds $\lambda_c$ (\Cref{prop:bitprefix}),
and a swap fee withheld from the reserves can only shrink the set of exposed
lower bounds (\Cref{prop:fee}). Hiding the direction removes the
non-contingent sandwich for every cost: when the first leg must be chosen
before the direction is revealed, no non-null first leg front-runs both
directions (\Cref{prop:direction}).

\section{Open Problems}

Several natural extensions remain open.
\begin{enumerate}[leftmargin=2em]
\item \textbf{Multiple victims and bundle interactions.} The present paper
studies one hidden victim order. In richer rounds there may be many orders, and
the trader may condition on joint interval information or attack only a
subset.
\item \textbf{General CFMMs.} Constant-product pools admit clean cancellations;
which of our monotonicity arguments survive for broader classes of CFMMs is
open.
\item \textbf{Richer allocation mechanisms.} We reduced the post-signal market
to symmetric traders competing for a single execution slot. Extending the
analysis to multi-stage or asymmetric allocation mechanisms while retaining
exact theorems remains open.
\item \textbf{Endogenous leakage.} Our leakage map is exogenous. Optimizing the
information structure subject to implementability or efficiency constraints,
or letting it be released strategically over time, turns the model into an
information-design problem \cite{KamenicaGentzkow11,BergemannMorris19}.
\end{enumerate}

\clearpage
\appendix

\section{Deferred Proofs for the Main Results}
\label{app:main-proofs}

\subsection{Structural lemmas}

\begin{proof}[Proof of \Cref{lem:ratio}]
For fixed $a$, write
\[
r(a,q)=\frac{X}{X+a}\cdot \frac{X+q}{X+a+q}.
\]
The first factor is constant in $q$, and at $a=0$ both factors equal $1$. For
$a>0$ we differentiate the second factor:
\[
\frac{d}{dq}\left(\frac{X+q}{X+a+q}\right)
=
\frac{a}{(X+a+q)^2}>0.
\]
\end{proof}

\begin{proof}[Proof of \Cref{lem:profit-monotone}]
Let
\[
N(a,q)=aq(2X+a+q),
\qquad
D(a,q)=(X+a)^2+aq,
\]
so that $\pi=N/D$.

For the $a$-derivative,
\[
\frac{\partial N}{\partial a}=q(2X+2a+q)
\quad\text{and}\quad
\frac{\partial D}{\partial a}=2X+2a+q.
\]
Hence
\[
\frac{\partial \pi}{\partial a}
=
\frac{\left(\frac{\partial N}{\partial a}\right)D - N\left(\frac{\partial D}{\partial a}\right)}{D^2}
=
\frac{(2X+2a+q)(qD-N)}{D^2}.
\]
Now
\[
qD-N
=
q\bigl((X+a)^2+aq\bigr)-aq(2X+a+q)
=
qX^2,
\]
so
\[
\frac{\partial \pi}{\partial a}
=
\frac{qX^2(2X+2a+q)}{D^2}>0.
\]

For the $q$-derivative,
\[
\frac{\partial N}{\partial q}=a(2X+a+2q)
\quad\text{and}\quad
\frac{\partial D}{\partial q}=a.
\]
Therefore
\[
\frac{\partial \pi}{\partial q}
=
\frac{a(2X+a+2q)D-aN}{D^2}
=
\frac{a\bigl((2X+a+2q)(X+a)^2+aq^2\bigr)}{D^2}>0.
\]
\end{proof}

\subsection{Main results}

\begin{proof}[Proof of \Cref{thm:admissibility}]
By \eqref{eq:type-admissibility}, a front-run size $a$ is admissible for type
$q$ if and only if
\[
\frac{v(a,q)}{h(q)}\ge 1-\tau.
\]
Using \eqref{eq:honest-output} and \eqref{eq:victim-after-front}, this becomes
\[
\frac{X(X+q)}{(X+a)(X+a+q)}\ge 1-\tau.
\]
Equivalently,
\[
(X+a)(X+a+q)\le \frac{X(X+q)}{1-\tau}.
\]
By \cref{lem:ratio}, $r(a,q)$ is nondecreasing in $q$, so the condition
$r(a,q)\ge 1-\tau$ is hardest to satisfy at the smallest feasible victim size. Hence $a$ is robustly admissible for the entire interval
$I=[\ell,u]$ if and only if
\[
(X+a)(X+a+\ell)\le \frac{X(X+\ell)}{1-\tau},
\]
which is \eqref{eq:admissibility-ineq}. Expanding gives
\[
a^2+(2X+\ell)a-\frac{\tau X(X+\ell)}{1-\tau}\le 0.
\]
Since the quadratic is convex and vanishes at exactly one nonnegative root, the
robustly admissible set is the interval $[0,a_{\max}(\ell,\tau)]$, where the
root is exactly \eqref{eq:amax}.

For $I=(0,u]$, each $a\in\Aset(I)$ is robustly admissible on $[\ell,u]$ for
every $\ell\in(0,u]$, so $a\le a_{\max}(\ell,\tau)$, and letting
$\ell\downarrow 0$ gives $a\le a_{\max}(0,\tau)=a_0$.
For the converse we note that the left-hand side of the admissibility
condition increases in $a$, so it suffices that $a_0$ is admissible for every
$q>0$. This follows from $(X+a_0)^2=X^2/(1-\tau)$ and
$X+a_0\le X/(1-\tau)$:
\[
(X+a_0)(X+a_0+q)=\frac{X^2}{1-\tau}+(X+a_0)q\le\frac{X(X+q)}{1-\tau}.
\]
\end{proof}

\begin{proof}[Proof of \Cref{thm:optimal}]
Part (a) is \cref{lem:profit-monotone}.

For part (b), by \cref{thm:admissibility} the admissible set is the interval
$[0,a_{\max}(\ell,\tau)]$. Since $\pi(a,q)$ is strictly increasing in $a$ for
every fixed $q$, the maximizer of $\pi(a,q)$ over the admissible set is
$a_{\max}(\ell,\tau)$ for every fixed $q\in I$. The same monotonicity implies
that the posterior expectation $\E_{q\sim\mu}[\pi(a,q)]$ is strictly increasing
in $a$, because it is the expectation of a pointwise increasing function. The
worst-case value $\min_{q\in I}\pi(a,q)$ is also strictly increasing in $a$.
Therefore the unique optimizer in all three problems is
$a^\star(I)=a_{\max}(\ell,\tau)$. For $I=(0,u]$ we use
$\Aset(I)=[0,a_0]$ from \cref{thm:admissibility}, and the same
argument gives the
pointwise and expected optimizers; the worst-case problem is excluded there
because $\inf_{q\in I}\pi(a,q)=0$ is not attained for $a>0$. Part (d) follows
immediately from part (b).

For part (c), \cref{lem:profit-monotone} also implies that for fixed
$a^\star(I)$ the smallest profit over $q\in I$ occurs at $q=\ell$, so
$W(I)=\pi(a^\star(I),\ell)$, which \eqref{eq:cancel} evaluates with
$\sigma=X+a^\star(I)$. Its first step substitutes $a^\star(I)=\sigma-X$ into
\eqref{eq:profit}, and its middle step uses the admissibility equality
$\sigma(\sigma+\ell)=X(X+\ell)/(1-\tau)$ twice:
\[
(\sigma-X)(\sigma+X+\ell)=\sigma^2+\sigma\ell-X(X+\ell)=\frac{\tau X(X+\ell)}{1-\tau},
\qquad
\sigma^2+(\sigma-X)\ell=\sigma(\sigma+\ell)-X\ell=\frac{X(X+\tau\ell)}{1-\tau}.
\]
\end{proof}

\begin{proof}[Proof of \Cref{thm:threshold}]
By \cref{thm:optimal}, the largest guaranteed gross profit obtainable by any
robustly admissible bundle on interval $I$ is exactly $W(I)$. Therefore a
robustly admissible bundle has strictly positive net payoff for every
$q\in I$ if and only if $W(I)>c$, proving \eqref{eq:threshold-wc}.

It remains to solve this inequality explicitly. By \eqref{eq:robust-value},
\[
\frac{\tau \ell (X+\ell)}{X+\tau\ell}>c
\]
if and only if
\[
\tau \ell(X+\ell)>c(X+\tau\ell),
\]
that is,
\[
\tau \ell^2+\tau(X-c)\ell-cX>0.
\]
The left-hand side is a convex quadratic in $\ell$, and we write its larger root
as
\[
\lambda_c
=
\frac{-\tau(X-c)+\sqrt{\tau^2(X-c)^2+4\tau cX}}{2\tau}
=
\frac{-(X-c)+\sqrt{(X-c)^2+\frac{4cX}{\tau}}}{2}.
\]
For $c>0$ the product of its roots is $-cX/\tau<0$, so the other root is
negative; for $c=0$ the roots are $-X$ and $\lambda_0=0$. In both cases a
positive $\ell$ makes the quadratic positive exactly when $\ell>\lambda_c$.
\end{proof}

\begin{proof}[Proof of \Cref{cor:perfect}]
By \cref{thm:optimal},
\[
W([\ell,Q])=\frac{\tau \ell (X+\ell)}{X+\tau\ell},
\]
which tends to $0$ as $\ell\downarrow 0$.

Let $c>0$, and suppose, for contradiction, that a robustly admissible $a$ has
$\pi(a,q)-c>0$ for every $q\in(0,Q]$. For each $\ell\in(0,Q]$ the front-run $a$
is robustly admissible on $[\ell,Q]\subseteq(0,Q]$ with positive net payoff
there, so $W([\ell,Q])>c$ by \cref{thm:threshold}. This fails once $\ell$ is
small enough that $W([\ell,Q])<c$.

For the last claim we run the proof of \cref{thm:pbe}(a) with $I=I_{s_\bot}$.
It uses only that $\E[\pi(a,q)\mid s_\bot]\le V$ for every robustly admissible
$a$, which holds with $V:=R_{s_\bot}+c$ by the definition of $R_{s_\bot}$. So
$R_{s_\bot}<0$ rules out an on-path sandwich.
\end{proof}

\subsection{Execution auction equilibrium}
\label{app:pbe-proof}

\begin{proof}[Proof of \Cref{thm:pbe}]
Fix a signal $s$, write $I=I_s$, and abbreviate $a^\star=a^\star(I)$ and
$V=V_{\mu_s}(I)$, so $R_s=V-c$. Two facts carry the argument. Every robustly
admissible $a$ has $\E[\pi(a,q)\mid s]\le V$, with equality only at $a=a^\star$
(\cref{thm:optimal}). A trader can always submit null and get $0$, so in an
equilibrium the winner's expected utility is at least $0$.

For part (a), suppose first that $R_s<0$. A winning bundle $(a,b)$ gives its
owner
\[
\E[\pi(a,q)\mid s]-b-c\le R_s-b<0,
\]
less than null secures, so no equilibrium has an on-path sandwich. Now let
$R_s=0$. If every trader submits null, a trader who deviates and wins with
$(a,b)$ gets at most $R_s-b=-b\le 0$, so this profile is an equilibrium. In an
equilibrium whose winner bids $(a,b)$, the winner's utility $u$ satisfies
$0\le u\le R_s-b=-b\le 0$, so $b=0$ and $u=0$; the auctioneer receives $0$ and
every losing trader gets $0$.

For part (b), let $R_s>0$. Some trader attacks on path: if every trader
submitted null, a trader who deviated to $(a^\star,b)$ with $0\le b<R_s$ would
win alone and earn $R_s-b>0$. Let $(a,b)$ be the winning bundle and $u\ge 0$ its
owner's utility. We claim $a=a^\star$. Otherwise $\E[\pi(a,q)\mid s]<V$, so
$V-b-c>u\ge 0$. A losing trader, who gets $0$, could then bid
$(a^\star,b+\varepsilon)$ with $0<\varepsilon<V-b-c$; its payment exceeds every
other payment, so it wins and earns $V-b-\varepsilon-c>0$. Hence $a=a^\star$,
and $u=R_s-b\ge 0$ gives $b\le R_s$. If $b<R_s$, a losing trader who bids
$(a^\star,b+\varepsilon)$ with $b+\varepsilon<R_s$ wins and earns
$R_s-b-\varepsilon>0$. So $b=R_s$. The winner then gets $0$, as does every
losing trader, and the auctioneer receives $R_s$.
\end{proof}

\subsection{Bit-prefix leakage and post-trade arbitrage}
\label{app:boundary-proof}

\begin{proof}[Proof of \Cref{prop:bitprefix}]
For $j\ge 1$ the bin $I_j$ is a closed interval with lower endpoint
$\ell_j=jQ/2^d>0$, so \cref{thm:threshold} makes universal robust
profitability on it equivalent to $\ell_j>\lambda_c$. For the lowest bin we
apply \cref{cor:perfect} with $Q/2^d$ in place of $Q$: since
$I_0=(0,Q/2^d]$ reveals no positive lower bound, for $c>0$ it supports no
universally profitable robust sandwich. With $d=0$ it is $I_{s_\bot}$.
\end{proof}

\begin{proof}[Proof of \Cref{prop:boundary}]
The first claim is \cref{cor:perfect}. For the second we use
$\Aset((0,Q])=[0,a_0]$ from \cref{thm:admissibility}. Every posterior is supported on
$(0,Q]$, so \cref{lem:profit-monotone} gives
$\E[\pi(a,q)\mid s_\bot]\le\pi(a_0,Q)<c$. Hence $R_{s_\bot}<0$, and
\cref{cor:perfect} rules out an on-path sandwich.

After a victim trade of size $q$ the reserves are $(X+q,XY/(X+q))$. A trader who
buys $\eta>0$ units of $Y$ at the external price $X/Y$ and sells them to the
pool earns, by \eqref{eq:sell-output},
\[
f(\eta)=\frac{(X+q)^2\eta}{XY+(X+q)\eta}-\frac{X\eta}{Y}.
\]
So $f(\eta)>0$ exactly when $\eta<Yq(2X+q)/(X(X+q))$, and $f$ is largest at
$\eta=Yq/(X+q)$:
\[
f\!\left(\frac{Yq}{X+q}\right)=\frac{q^2}{X+q}.
\]
\end{proof}

\section{Deferred Proofs for Extensions}
\label{app:extension-proofs}

\begin{proof}[Proof of \Cref{prop:chance}]
We fix $a\ge 0$. By \cref{lem:ratio}, if $a$ is admissible for type $q$, it is
admissible for every larger type, and \eqref{eq:type-admissibility} is a
non-strict inequality between continuous functions of $q$. So the failure set
is $\{q\in I:q<q^*\}$ for some $q^*\in[\ell,\infty]$, and the failure
probability is $F_\mu(\min\{q^*,u\})$. Since $F_\mu$ is continuous and strictly
increasing on $I$, this is at most $\varepsilon$ exactly when
$q^*\le q_\varepsilon$, that is, when $a$ is admissible for every
$q\in[q_\varepsilon,u]$. By \cref{thm:admissibility} this happens exactly when
$a\le a_{\max}(q_\varepsilon,\tau)$.
\end{proof}

\begin{proof}[Proof of \Cref{prop:direction}]
A non-null first leg either buys $Y$ with $X$ or sells $Y$ for $X$; the two
cases are symmetric under token exchange, so we treat a buy of size $a>0$. It
leaves the reserves at $(X+a,XY/(X+a))$. A victim who then sells $q$ units of
$Y$ receives more than its honest output $Xq/(Y+q)$, so the leg does not
front-run the direction $Y\!\to\!X$:
\[
\frac{(X+a)^2q}{XY+(X+a)q}
=
\frac{Xq}{Y+q}+\frac{aq(2XY+Xq+Ya+aq)}{(Y+q)(XY+Xq+aq)}.
\]
\end{proof}

\begin{proof}[Proof of \Cref{prop:hidden-tau}]
For fixed $a$ and $q$, the right-hand side $X(X+q)/(1-\tau)$ of the
admissibility condition is continuous and increasing in $\tau$. So $a$ is
admissible for every $\tau\in T_s$ exactly when it is admissible at
$\underline{\tau}_s$; if $\underline{\tau}_s\notin T_s$, we let
$\tau\downarrow\underline{\tau}_s$ within $T_s$, which preserves the non-strict
inequality. Robustness over $I_s\times T_s$ is therefore robustness over $I_s$
at $\underline{\tau}_s$. For $\underline{\tau}_s>0$,
\cref{thm:admissibility,thm:optimal} with $\tau=\underline{\tau}_s$ give the
stated formulas. For $\underline{\tau}_s=0$, the condition
$(X+a)(X+a+\ell_s)\le X(X+\ell_s)$ forces $a=0$, and both formulas give $0$.
\end{proof}

\begin{proof}[Proof of \Cref{prop:scale}]
Substitute $c=\widehat c X$ and $\lambda_c=\widehat\lambda X$ into
\eqref{eq:lambda-c} and divide through by $X$. The expansion follows from a
first-order Taylor series of the square root around $\widehat c=0$.
\end{proof}

\begin{proof}[Proof of \Cref{prop:fee}]
With the fee withheld, each swap acts on the reserves exactly as the fee-free
swap of its effective input, so \eqref{eq:buy-output} and
\eqref{eq:sell-output} apply with every input $\delta$ replaced by
$\alpha\delta$.

For (i), the victim's honest output is $h(\alpha q)$ and its output after the
front-run is $v(\alpha a,\alpha q)$, so the condition $v\ge(1-\tau)h$ is the
fee-free condition \eqref{eq:type-admissibility} at $(\alpha a,\alpha q)$,
which is the display in (i).

For (ii), write $\pi_\rho(a,q)$ for the gross profit with the fee. The front-run
buys $y=y_F(\alpha a)$, the victim moves the pool by the effective input
$\alpha q$, and the unwind returns $\Delta_X(\alpha y;\cdot)\le\Delta_X(y;\cdot)$
because $\Delta_X$ increases in its input. So, with the last step by
\cref{lem:profit-monotone},
\[
\pi_\rho(a,q)\le x_B(\alpha a,\alpha q)-a=\pi(\alpha a,\alpha q)-\rho a
\le\pi(a,q).
\]

For (iii), we take an $a$ that is robustly admissible with the fee on
$I=[\ell,u]$ and has $\pi_\rho(a,q)>c$ for every $q\in I$. By (i), $\alpha a$ is fee-free admissible
at the type $\alpha\ell\le\ell$, hence by \cref{lem:ratio} at every
$q\ge\alpha\ell$, so $\alpha a\in\Aset(I)$. By (ii) and
\cref{lem:profit-monotone},
$\pi(\alpha a,q)\ge\pi(\alpha a,\alpha q)\ge\pi_\rho(a,q)>c$ for every
$q\in I$. So $\alpha a$ is a universally profitable robust sandwich without
the fee.
\end{proof}

\newpage
\section*{NeurIPS Paper Checklist}

\begin{enumerate}

\item {\bf Claims}
    \item[] Question: Do the main claims made in the abstract and introduction accurately reflect the paper's contributions and scope?
    \item[] Answer: \answerYes{}.
    \item[] Justification: The abstract and introduction state the theorem-level contributions and delimit the model as a stylized theoretical analysis. The main claims are matched to formal definitions, theorem statements, corollaries, and proof sections in the body of the paper.
    \item[] Guidelines:
    \begin{itemize}
        \item The answer \answerNA{} means that the abstract and introduction do not include the claims made in the paper.
        \item The abstract and/or introduction should clearly state the claims made, including the contributions made in the paper and important assumptions and limitations. A \answerNo{} or \answerNA{} answer to this question will not be perceived well by the reviewers. 
        \item The claims made should match theoretical and experimental results, and reflect how much the results can be expected to generalize to other settings. 
        \item It is fine to include aspirational goals as motivation as long as it is clear that these goals are not attained by the paper. 
    \end{itemize}

\item {\bf Limitations}
    \item[] Question: Does the paper discuss the limitations of the work performed by the authors?
    \item[] Answer: \answerYes{}.
    \item[] Justification: The paper discusses its modeling scope, boundary cases, and open problems in the introduction, model discussion, boundary/extension discussion, and concluding open-problems material. In particular, the analysis is presented as a theoretical characterization rather than an empirical or deployed-system evaluation.
    \item[] Guidelines:
    \begin{itemize}
        \item The answer \answerNA{} means that the paper has no limitation while the answer \answerNo{} means that the paper has limitations, but those are not discussed in the paper. 
        \item The authors are encouraged to create a separate ``Limitations'' section in their paper.
        \item The paper should point out any strong assumptions and how robust the results are to violations of these assumptions (e.g., independence assumptions, noiseless settings, model well-specification, asymptotic approximations only holding locally). The authors should reflect on how these assumptions might be violated in practice and what the implications would be.
        \item The authors should reflect on the scope of the claims made, e.g., if the approach was only tested on a few datasets or with a few runs. In general, empirical results often depend on implicit assumptions, which should be articulated.
        \item The authors should reflect on the factors that influence the performance of the approach. For example, a facial recognition algorithm may perform poorly when image resolution is low or images are taken in low lighting. Or a speech-to-text system might not be used reliably to provide closed captions for online lectures because it fails to handle technical jargon.
        \item The authors should discuss the computational efficiency of the proposed algorithms and how they scale with dataset size.
        \item If applicable, the authors should discuss possible limitations of their approach to address problems of privacy and fairness.
        \item While the authors might fear that complete honesty about limitations might be used by reviewers as grounds for rejection, a worse outcome might be that reviewers discover limitations that aren't acknowledged in the paper. The authors should use their best judgment and recognize that individual actions in favor of transparency play an important role in developing norms that preserve the integrity of the community. Reviewers will be specifically instructed to not penalize honesty concerning limitations.
    \end{itemize}

\item {\bf Theory assumptions and proofs}
    \item[] Question: For each theoretical result, does the paper provide the full set of assumptions and a complete (and correct) proof?
    \item[] Answer: \answerYes{}.
    \item[] Justification: The paper states the assumptions through formal definitions and theorem hypotheses, and provides proofs for the main theorems, lemmas, propositions, and corollaries. Results explicitly marked as sketches or boundary observations are identified as such.
    \item[] Guidelines:
    \begin{itemize}
        \item The answer \answerNA{} means that the paper does not include theoretical results. 
        \item All the theorems, formulas, and proofs in the paper should be numbered and cross-referenced.
        \item All assumptions should be clearly stated or referenced in the statement of any theorems.
        \item The proofs can either appear in the main paper or the supplemental material, but if they appear in the supplemental material, the authors are encouraged to provide a short proof sketch to provide intuition. 
        \item Inversely, any informal proof provided in the core of the paper should be complemented by formal proofs provided in appendix or supplemental material.
        \item Theorems and Lemmas that the proof relies upon should be properly referenced. 
    \end{itemize}

    \item {\bf Experimental result reproducibility}
    \item[] Question: Does the paper fully disclose all the information needed to reproduce the main experimental results of the paper to the extent that it affects the main claims and/or conclusions of the paper (regardless of whether the code and data are provided or not)?
    \item[] Answer: \answerNA{}.
    \item[] Justification: The paper does not report experiments. The main results are mathematical statements whose verification depends on the formal model, theorem statements, and proofs provided in the paper.
    \item[] Guidelines:
    \begin{itemize}
        \item The answer \answerNA{} means that the paper does not include experiments.
        \item If the paper includes experiments, a \answerNo{} answer to this question will not be perceived well by the reviewers: Making the paper reproducible is important, regardless of whether the code and data are provided or not.
        \item If the contribution is a dataset and\slash or model, the authors should describe the steps taken to make their results reproducible or verifiable. 
        \item Depending on the contribution, reproducibility can be accomplished in various ways. For example, if the contribution is a novel architecture, describing the architecture fully might suffice, or if the contribution is a specific model and empirical evaluation, it may be necessary to either make it possible for others to replicate the model with the same dataset, or provide access to the model. In general. releasing code and data is often one good way to accomplish this, but reproducibility can also be provided via detailed instructions for how to replicate the results, access to a hosted model (e.g., in the case of a large language model), releasing of a model checkpoint, or other means that are appropriate to the research performed.
        \item While NeurIPS does not require releasing code, the conference does require all submissions to provide some reasonable avenue for reproducibility, which may depend on the nature of the contribution. For example
        \begin{enumerate}
            \item If the contribution is primarily a new algorithm, the paper should make it clear how to reproduce that algorithm.
            \item If the contribution is primarily a new model architecture, the paper should describe the architecture clearly and fully.
            \item If the contribution is a new model (e.g., a large language model), then there should either be a way to access this model for reproducing the results or a way to reproduce the model (e.g., with an open-source dataset or instructions for how to construct the dataset).
            \item We recognize that reproducibility may be tricky in some cases, in which case authors are welcome to describe the particular way they provide for reproducibility. In the case of closed-source models, it may be that access to the model is limited in some way (e.g., to registered users), but it should be possible for other researchers to have some path to reproducing or verifying the results.
        \end{enumerate}
    \end{itemize}

\item {\bf Open access to data and code}
    \item[] Question: Does the paper provide open access to the data and code, with sufficient instructions to faithfully reproduce the main experimental results, as described in supplemental material?
    \item[] Answer: \answerNA{}.
    \item[] Justification: The paper does not use datasets, trained models, or experimental code to support its claims. Algorithmic statements, where present, are specified and analyzed mathematically in the text.
    \item[] Guidelines:
    \begin{itemize}
        \item The answer \answerNA{} means that paper does not include experiments requiring code.
        \item Please see the NeurIPS code and data submission guidelines (\url{https://neurips.cc/public/guides/CodeSubmissionPolicy}) for more details.
        \item While we encourage the release of code and data, we understand that this might not be possible, so \answerNo{} is an acceptable answer. Papers cannot be rejected simply for not including code, unless this is central to the contribution (e.g., for a new open-source benchmark).
        \item The instructions should contain the exact command and environment needed to run to reproduce the results. See the NeurIPS code and data submission guidelines (\url{https://neurips.cc/public/guides/CodeSubmissionPolicy}) for more details.
        \item The authors should provide instructions on data access and preparation, including how to access the raw data, preprocessed data, intermediate data, and generated data, etc.
        \item The authors should provide scripts to reproduce all experimental results for the new proposed method and baselines. If only a subset of experiments are reproducible, they should state which ones are omitted from the script and why.
        \item At submission time, to preserve anonymity, the authors should release anonymized versions (if applicable).
        \item Providing as much information as possible in supplemental material (appended to the paper) is recommended, but including URLs to data and code is permitted.
    \end{itemize}

\item {\bf Experimental setting/details}
    \item[] Question: Does the paper specify all the training and test details (e.g., data splits, hyperparameters, how they were chosen, type of optimizer) necessary to understand the results?
    \item[] Answer: \answerNA{}.
    \item[] Justification: The paper contains no training, test set, hyperparameter, optimizer, or empirical evaluation setup. Its results are derived from the stated theoretical model.
    \item[] Guidelines:
    \begin{itemize}
        \item The answer \answerNA{} means that the paper does not include experiments.
        \item The experimental setting should be presented in the core of the paper to a level of detail that is necessary to appreciate the results and make sense of them.
        \item The full details can be provided either with the code, in appendix, or as supplemental material.
    \end{itemize}

\item {\bf Experiment statistical significance}
    \item[] Question: Does the paper report error bars suitably and correctly defined or other appropriate information about the statistical significance of the experiments?
    \item[] Answer: \answerNA{}.
    \item[] Justification: The paper does not include experiments or statistical estimates. Consequently, there are no empirical error bars, confidence intervals, or significance tests to report.
    \item[] Guidelines:
    \begin{itemize}
        \item The answer \answerNA{} means that the paper does not include experiments.
        \item The authors should answer \answerYes{} if the results are accompanied by error bars, confidence intervals, or statistical significance tests, at least for the experiments that support the main claims of the paper.
        \item The factors of variability that the error bars are capturing should be clearly stated (for example, train/test split, initialization, random drawing of some parameter, or overall run with given experimental conditions).
        \item The method for calculating the error bars should be explained (closed form formula, call to a library function, bootstrap, etc.)
        \item The assumptions made should be given (e.g., Normally distributed errors).
        \item It should be clear whether the error bar is the standard deviation or the standard error of the mean.
        \item It is OK to report 1-sigma error bars, but one should state it. The authors should preferably report a 2-sigma error bar than state that they have a 96\% CI, if the hypothesis of Normality of errors is not verified.
        \item For asymmetric distributions, the authors should be careful not to show in tables or figures symmetric error bars that would yield results that are out of range (e.g., negative error rates).
        \item If error bars are reported in tables or plots, the authors should explain in the text how they were calculated and reference the corresponding figures or tables in the text.
    \end{itemize}

\item {\bf Experiments compute resources}
    \item[] Question: For each experiment, does the paper provide sufficient information on the computer resources (type of compute workers, memory, time of execution) needed to reproduce the experiments?
    \item[] Answer: \answerNA{}.
    \item[] Justification: The paper does not run computational experiments. No experimental compute resources are required to reproduce the paper's main claims.
    \item[] Guidelines:
    \begin{itemize}
        \item The answer \answerNA{} means that the paper does not include experiments.
        \item The paper should indicate the type of compute workers CPU or GPU, internal cluster, or cloud provider, including relevant memory and storage.
        \item The paper should provide the amount of compute required for each of the individual experimental runs as well as estimate the total compute. 
        \item The paper should disclose whether the full research project required more compute than the experiments reported in the paper (e.g., preliminary or failed experiments that didn't make it into the paper). 
    \end{itemize}
    
\item {\bf Code of ethics}
    \item[] Question: Does the research conducted in the paper conform, in every respect, with the NeurIPS Code of Ethics \url{https://neurips.cc/public/EthicsGuidelines}?
    \item[] Answer: \answerYes{}.
    \item[] Justification: The work is a theoretical analysis and does not involve human subjects, private data, deployed interventions, or release of potentially harmful models or datasets. The paper preserves anonymity in the submission version.
    \item[] Guidelines:
    \begin{itemize}
        \item The answer \answerNA{} means that the authors have not reviewed the NeurIPS Code of Ethics.
        \item If the authors answer \answerNo, they should explain the special circumstances that require a deviation from the Code of Ethics.
        \item The authors should make sure to preserve anonymity (e.g., if there is a special consideration due to laws or regulations in their jurisdiction).
    \end{itemize}

\item {\bf Broader impacts}
    \item[] Question: Does the paper discuss both potential positive societal impacts and negative societal impacts of the work performed?
    \item[] Answer: \answerYes{}.
    \item[] Justification: The paper discusses positive implications for robust, privacy-preserving, or incentive-compatible decentralized mechanisms, while also identifying boundary cases and residual attack surfaces. The work is theoretical and does not propose a deployed system, so the impact discussion is correspondingly scoped to technical security and market-design implications.
    \item[] Guidelines:
    \begin{itemize}
        \item The answer \answerNA{} means that there is no societal impact of the work performed.
        \item If the authors answer \answerNA{} or \answerNo, they should explain why their work has no societal impact or why the paper does not address societal impact.
        \item Examples of negative societal impacts include potential malicious or unintended uses (e.g., disinformation, generating fake profiles, surveillance), fairness considerations (e.g., deployment of technologies that could make decisions that unfairly impact specific groups), privacy considerations, and security considerations.
        \item The conference expects that many papers will be foundational research and not tied to particular applications, let alone deployments. However, if there is a direct path to any negative applications, the authors should point it out. For example, it is legitimate to point out that an improvement in the quality of generative models could be used to generate Deepfakes for disinformation. On the other hand, it is not needed to point out that a generic algorithm for optimizing neural networks could enable people to train models that generate Deepfakes faster.
        \item The authors should consider possible harms that could arise when the technology is being used as intended and functioning correctly, harms that could arise when the technology is being used as intended but gives incorrect results, and harms following from (intentional or unintentional) misuse of the technology.
        \item If there are negative societal impacts, the authors could also discuss possible mitigation strategies (e.g., gated release of models, providing defenses in addition to attacks, mechanisms for monitoring misuse, mechanisms to monitor how a system learns from feedback over time, improving the efficiency and accessibility of ML).
    \end{itemize}
    
\item {\bf Safeguards}
    \item[] Question: Does the paper describe safeguards that have been put in place for responsible release of data or models that have a high risk for misuse (e.g., pre-trained language models, image generators, or scraped datasets)?
    \item[] Answer: \answerNA{}.
    \item[] Justification: The paper does not release data, trained models, scraped datasets, or other assets with high misuse risk. No release-specific safeguards are therefore applicable.
    \item[] Guidelines:
    \begin{itemize}
        \item The answer \answerNA{} means that the paper poses no such risks.
        \item Released models that have a high risk for misuse or dual-use should be released with necessary safeguards to allow for controlled use of the model, for example by requiring that users adhere to usage guidelines or restrictions to access the model or implementing safety filters. 
        \item Datasets that have been scraped from the Internet could pose safety risks. The authors should describe how they avoided releasing unsafe images.
        \item We recognize that providing effective safeguards is challenging, and many papers do not require this, but we encourage authors to take this into account and make a best faith effort.
    \end{itemize}

\item {\bf Licenses for existing assets}
    \item[] Question: Are the creators or original owners of assets (e.g., code, data, models), used in the paper, properly credited and are the license and terms of use explicitly mentioned and properly respected?
    \item[] Answer: \answerNA{}.
    \item[] Justification: The paper does not use existing code, datasets, models, benchmarks, or other external assets as research inputs. Prior scholarly work is credited through citations in the related-work and references sections.
    \item[] Guidelines:
    \begin{itemize}
        \item The answer \answerNA{} means that the paper does not use existing assets.
        \item The authors should cite the original paper that produced the code package or dataset.
        \item The authors should state which version of the asset is used and, if possible, include a URL.
        \item The name of the license (e.g., CC-BY 4.0) should be included for each asset.
        \item For scraped data from a particular source (e.g., website), the copyright and terms of service of that source should be provided.
        \item If assets are released, the license, copyright information, and terms of use in the package should be provided. For popular datasets, \url{paperswithcode.com/datasets} has curated licenses for some datasets. Their licensing guide can help determine the license of a dataset.
        \item For existing datasets that are re-packaged, both the original license and the license of the derived asset (if it has changed) should be provided.
        \item If this information is not available online, the authors are encouraged to reach out to the asset's creators.
    \end{itemize}

\item {\bf New assets}
    \item[] Question: Are new assets introduced in the paper well documented and is the documentation provided alongside the assets?
    \item[] Answer: \answerNA{}.
    \item[] Justification: The paper does not introduce or release new datasets, code packages, models, or benchmarks. Its contribution consists of formal models, theorems, and proofs.
    \item[] Guidelines:
    \begin{itemize}
        \item The answer \answerNA{} means that the paper does not release new assets.
        \item Researchers should communicate the details of the dataset\slash code\slash model as part of their submissions via structured templates. This includes details about training, license, limitations, etc. 
        \item The paper should discuss whether and how consent was obtained from people whose asset is used.
        \item At submission time, remember to anonymize your assets (if applicable). You can either create an anonymized URL or include an anonymized zip file.
    \end{itemize}

\item {\bf Crowdsourcing and research with human subjects}
    \item[] Question: For crowdsourcing experiments and research with human subjects, does the paper include the full text of instructions given to participants and screenshots, if applicable, as well as details about compensation (if any)? 
    \item[] Answer: \answerNA{}.
    \item[] Justification: The paper does not involve crowdsourcing, surveys, experiments with human participants, or human-subject data collection. There are therefore no participant instructions, screenshots, or compensation details to report.
    \item[] Guidelines:
    \begin{itemize}
        \item The answer \answerNA{} means that the paper does not involve crowdsourcing nor research with human subjects.
        \item Including this information in the supplemental material is fine, but if the main contribution of the paper involves human subjects, then as much detail as possible should be included in the main paper. 
        \item According to the NeurIPS Code of Ethics, workers involved in data collection, curation, or other labor should be paid at least the minimum wage in the country of the data collector. 
    \end{itemize}

\item {\bf Institutional review board (IRB) approvals or equivalent for research with human subjects}
    \item[] Question: Does the paper describe potential risks incurred by study participants, whether such risks were disclosed to the subjects, and whether Institutional Review Board (IRB) approvals (or an equivalent approval/review based on the requirements of your country or institution) were obtained?
    \item[] Answer: \answerNA{}.
    \item[] Justification: The paper does not involve human subjects or crowdsourced participants. IRB or equivalent human-subjects review is therefore not applicable.
    \item[] Guidelines:
    \begin{itemize}
        \item The answer \answerNA{} means that the paper does not involve crowdsourcing nor research with human subjects.
        \item Depending on the country in which research is conducted, IRB approval (or equivalent) may be required for any human subjects research. If you obtained IRB approval, you should clearly state this in the paper. 
        \item We recognize that the procedures for this may vary significantly between institutions and locations, and we expect authors to adhere to the NeurIPS Code of Ethics and the guidelines for their institution. 
        \item For initial submissions, do not include any information that would break anonymity (if applicable), such as the institution conducting the review.
    \end{itemize}

\item {\bf Declaration of LLM usage}
    \item[] Question: Does the paper describe the usage of LLMs if it is an important, original, or non-standard component of the core methods in this research? Note that if the LLM is used only for writing, editing, or formatting purposes and does \emph{not} impact the core methodology, scientific rigor, or originality of the research, declaration is not required.
    \item[] Answer: \answerYes{}.
    \item[] Justification: AI tools were used to help check the manuscript's derivations and proofs, and to provide suggestions for improving and revising the derivations.
    \item[] Guidelines:
    \begin{itemize}
        \item The answer \answerNA{} means that the core method development in this research does not involve LLMs as any important, original, or non-standard components.
        \item Please refer to our LLM policy in the NeurIPS handbook for what should or should not be described.
    \end{itemize}

\end{enumerate}

\end{document}